\newlength{\eqboxstorage}
\newcommand\cA{\mathcal{A}}
\newcommand{\Ber}{\mathrm{Ber}}
\newcommand{\Berhalf}{\Ber(1/2)}
\newcommand{\BN}{\mathsf{B}_N}
\newcommand{\GN}{\mathsf{G}_N}
\newcommand{\BEight}{\mathsf{B}_8}
\newcommand{\GEight}{\mathsf{G}_8}
\newcommand{\HU}{H(\tilde{U}_i | \underline{Q_i})}
\newcommand{\HV}{H(\tilde{V}_i | \underline{R_i})}
\newcommand{\Arikan}{Ar\i kan}
\newtheorem{theo}{Theorem}
\newtheorem{conj}[theo]{Conjecture}
\newtheorem{lemm}[theo]{Lemma}
\newtheorem{coro}[theo]{Corollary}
\theoremstyle{definition}
\newtheorem{defi}{Definition}
\newcommand{\twobibs}[2]{#2} 
\title{Polar Coding for Processes with Memory}
\author{%
\IEEEauthorblockN{\bf Eren \c Sa\c so\u glu}
\IEEEauthorblockA{
Apple Inc.\thanks{This work was done when Eren \c Sa\c so\u glu was at the
Technion in June--July 2015. It was presented in part at ISIT 2016.}\\ 
Cupertino, CA, USA\\
{\tt eren.sasoglu@gmail.com}\\}
\and
\IEEEauthorblockN{\bf Ido Tal}
\IEEEauthorblockA{
Department of Electrical Engineering\\
Techion, Haifa 32000, Israel\\
{\tt idotal@ee.technion.ac.il}\\}
}
\begin{document}
\maketitle
\begin{abstract}
    We study polar coding for stochastic processes with memory. For example, a process may be defined by the joint distribution of the input and output of a channel. The memory may be present in the channel, the input, or both. We show that $\psi$-mixing processes polarize under the standard \Arikan\ transform, under a mild condition. We further show that the rate of polarization of the \emph{low-entropy} synthetic channels is roughly $O(2^{-\sqrt{N}})$,  where $N$ is the blocklength. That is, essentially the same rate as in the memoryless case. 
\end{abstract}
\begin{IEEEkeywords}
Channels with memory, polar codes, mixing, periodic processes, fast
polarization, rate of polarization.
\end{IEEEkeywords}

\section{Introduction}

Polar codes were invented by \Arikan~\cite{Arikan2009} as a
low-complexity method to achieve the capacity of symmetric
binary-input memoryless channels.  The technique that underlies these
codes, called \emph{polarization}, is quite versatile, and has since
been applied to numerous classical memoryless problems in information
theory.  

Many practical sources and channels are not well-described by
memoryless models.  In wireless communication, for example, memory in
the form of intersymbol interference is quite prominent due to
multipath propagation, as are slow variations in channel conditions
due to mobility.  In practice, this type of memory is commonly
handled by eliminating it, e.g., by augmenting the transmitter/receiver
appropriately to create an overall memoryless channel.  Memoryless
coding techniques are then used for communication.  Channel
equalization, interleaving, and OFDM techniques are perhaps the most notable examples
of this approach.  

In contrast, we are interested here in whether polar coding can be
used \emph{directly} on channels and sources with memory.  In addition
to being of theoretical interest, such results may help simplify the
design of communication or compression systems.  

Little is known about the theory of polarization for settings with
memory.  In particular, it was shown in~\cite{Wang+:15c} that the
successive cancellation decoding complexity of polar codes scales with
the number of states of the underlying process, and thus is practical
if the amount of memory in the system is modest.  It was shown
in~\cite[Chapter 5]{Sasoglu2011} that \Arikan's standard transform
indeed polarizes a class of mixing processes with finite memory.
Whether polarization takes place sufficiently fast to yield a coding
theorem has been left open, however, and that is the problem we address
here.

We first give a proof of polarization that is both simpler than the
one given in~\cite{Sasoglu2011}, and holds for the more general class
of $\psi$-mixing processes with finite $\psi_0$ (both concepts are defined in Section~\ref{sec:setting}). We further show that the asymptotic rate of
polarization of the \emph{low-entropy} synthetic channels is as in the memoryless
case. Conversely, we show a simple counter-example of a process that is not $\psi$-mixing and which does not polarize because it is periodic. We remark that in \cite{ShuvalTal:17.2p}, under additional assumptions, fast polarization is shown for the \emph{high-entropy} synthetic channels. 


\section{Setting}
\label{sec:setting}
Let $(X_i,Y_i)$, $i\in\mathbb{Z}$, be a stationary process, where
the $Y_i$ take values in a finite alphabet $\mathcal{Y}$.  We assume $X_i\in\{0,1\}$ 
to keep the notation simple, but the results here can be generalized
to arbitrary finite alphabets using standard techniques.  See, for
example,~\cite[Chapter 3]{Sasoglu2011}. We think of $X_i$ as a sequence to be estimated, and~$Y_i$ as a
sequence of observations related to~$X_i$.  In particular,~$X_i$ may
be the input sequence to a communication channel, with the
corresponding channel output~$Y_i$.  Alternatively,~$X_i$ may be the
output of a data source to be compressed, and~$Y_i$ may be the side
information available to the decompressor. 

A key property of the processes we consider is $\psi$-mixing. We follow\footnote{To the best of our understanding, the first displayed equation on page 169 of \cite{Shields1996} should be ``$\sum_v \mu(uvw) \leq \cdots$''.}~\cite[Page 169]{Shields1996} and say that a process~$T_i$ is $\psi$-mixing if there exists a nonincreasing sequence $\psi_k \to  1$ as $k\to\infty$ such that
\begin{align}
\label{eqn:psi-mixing}
\Pr(A\cap B)\le \psi_k \Pr(A)\Pr(B)
\end{align}
for all $A\in\sigma(T_{-\infty}^0)$ and $B\in\sigma(T_{k+1}^{\infty})$,
where $\sigma(\cdot)$ denotes the sigma-field generated by its
argument.  Since $\psi_k \to 1$, in a
$\psi$-mixing process, any two events $A\in\sigma(T_{-\infty}^0)$ and $B\in\sigma(T_{k+1}^{\infty})$
that are sufficiently separated in `time' are almost independent. Namely, by \cite[Definition 3.3, page 67]{Bradley:07b}, \cite[Proposition 3.11, part a, page 76]{Bradley:07b}, and \cite[Proposition 5.2, part III.a, page 153]{Bradley:07b} 
\[
    |\Pr(A \cap B) - \Pr(A)\Pr(B)| \leq \frac{\psi_k-1}{2} \; .
\]

In this paper, we require for polarization that a process be $\psi$-mixing with finite $\psi_0$. Since this requirement appears several times, we make the following definition.

\begin{defi}[Promptly $\psi$-mixing]
    Let $(X_i,Y_i)$, $i\in\mathbb{Z}$, be a stationary process, where  $X_i\in\{0,1\}$ and the $Y_i$ take values in a finite alphabet $\mathcal{Y}$. Such a process is called \emph{promptly $\psi$-mixing} if it is $\psi$-mixing and $\psi_0 < \infty$.
\end{defi}


Many source and channel models of practical importance satisfy our requirements of being promptly $\psi$-mixing. Specifically, this holds for a class of models with memory that have an underlying ergodic Markov structure, as shown in \cite[Lemma 5]{ShuvalTal:17.2p}. There, these processes are termed Finite-state, Aperiodic, Irreducible (hidden) Markov processes, or FAIM for short. The parameter~$\psi_0$ plays an important role in this paper,
and can be computed easily if the underlying process is FAIM \cite[Equation 19]{ShuvalTal:17.2p}.

We are interested in the effects of Ar\i kan's standard polar
transform on stationary processes with memory.  For this purpose, we let
$U_1^N=X_1^N \BN \GN$, where the matrix multiplications are over the
binary field, $N=2^n$ for positive integers $n$, $\GN$ is the $n$th
Kronecker power of
$\big(\begin{smallmatrix}1&0\\1&1\end{smallmatrix}\big)$, and $\BN$ is
the $N \times N$ bit-reversal matrix.  The conditional
entropy rate of $X_i$ is defined as 
\[ 
\mathcal{H}_{X|Y}
	=\lim_{N\to\infty} \frac1N H(X_1^N|Y_1^N) 
	=\lim_{N\to\infty}\frac1N H(X_1^N,Y_1^N)
	-\lim_{N\to\infty}\frac1N H(Y_1^N).
\] 
The limits on the right-hand-side exist due to stationarity
\cite[Theorem 4.2.1]{CoverThomas:06b}.  Also useful for the analysis
is the parameter 
$$
Z(A|B)=2\sum_{b\in\mathcal{B}}
	\sqrt{p_{A,B}(0,b)p_{A,B}(1,b)}
$$
for random variables $A\in\{0,1\}$ and $B \in \mathcal{B}$.  Sometimes
called the Bhattacharyya parameter, $Z(A|B)$ upper-bounds the error
probability of optimally guessing $A$ by observing $B$.  See, for
example, \cite[Proposition~2.2]{Sasoglu2011}. 

\section{Main Results}

The following two theorems relate to the polarization of promptly $\psi$-mixing process.
\begin{theo}[Polarization]
\label{thm:weak}
Let $ (X_i,Y_i)$, $i \in \mathbb{Z}$, be a promptly $\psi$-mixing process, then for all $\epsilon>0$
\begin{align*}
\lim_{N\to\infty}
	\frac1N 
	\big|\big\{
	i: H(U_i|U_1^{i-1},Y_1^N)>1-\epsilon
	\big\}\big|
	= \mathcal{H}_{X|Y} \; ,\\
\lim_{N\to\infty}
	\frac1N 
	\big|\big\{
	i: H(U_i|U_1^{i-1},Y_1^N)<\epsilon
	\big\}\big|
	= 1-\mathcal{H}_{X|Y} \; .
\end{align*}
\end{theo}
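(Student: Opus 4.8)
The plan is to reduce the memory case to the memoryless case by a coupling/approximation argument that exploits $\psi$-mixing, and then invoke Arıkan's original martingale argument. The key observation is that the standard polar transform acts on blocks, and $\psi$-mixing guarantees that widely separated blocks of the input are almost independent. So the strategy is: (1) set up the martingale $H_n = H(U_{B_n}|U_1^{B_n-1}Y_1^N)$ along a random path in the recursive construction, exactly as in the memoryless case; (2) show that this sequence converges almost surely, which follows from the bounded martingale (or bounded submartingale/supermartingale) convergence theorem once one checks the relevant (sub/super)martingale property survives the introduction of memory; and (3) identify the limit of $\frac1N\sum_i H(U_i|U_1^{i-1}Y_1^N)$ with $\mathcal{H}_{X|Y}$ using the chain rule and the entropy-rate definition given in the excerpt, together with the fact that $H(U_1^N|Y_1^N)=H(X_1^N|Y_1^N)$ since $\BN\GN$ is invertible.

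**First I would** make precise the sense in which the one-step polar transform ``looks memoryless.'' Recall that $(U_1^N,Y_1^N)$ decomposes: writing $N=2M$, the first $M$ coordinates $U_1^M$ are obtained by applying the size-$M$ transform to $X_1^M + X_{M+1}^{2M}$ (componentwise XOR, up to reindexing), and the last $M$ to $X_{M+1}^{2M}$. In the memoryless case these two halves are built from independent copies; here they are not, but $\psi$-mixing with $\psi_0<\infty$ gives a \emph{uniform} multiplicative control $\Pr(A\cap B)\le\psi_0\Pr(A)\Pr(B)$ between any past and future events — not just asymptotically. I would use this to bound the difference between $H(U_i|U_1^{i-1}Y_1^N)$ computed under the true joint law and under the ``independentized'' product law by something that is controlled uniformly. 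The cleanest route is probably to track the parameter $\psi_0$ through the recursion and show that each of the two synthesized sub-processes is again $\psi$-mixing with a parameter that stays bounded (indeed, one expects $\psi_0$ to be non-increasing or at worst stay finite under the transform), so the whole argument closes under induction on $n$.

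**The main obstacle** is step (2): establishing that the entropy process along the tree is close enough to a (sub/super)martingale that the convergence theorem applies, \emph{and} that the two polarization extremes ($H\to0$ and $H\to1$) are the only fixed points. In the memoryless case the conditional entropies of the two children average to the parent's entropy exactly (conservation) and separate strictly unless the parent is already polarized; with memory, ``conservation'' holds only up to an error governed by how far $\psi_0$ is from $1$. So I would need a quantitative version: the sum of the children's entropies differs from twice the parent's by at most $c(\psi_0)$ for some explicit constant, and crucially this slack must vanish in the Cesàro average as $n\to\infty$ — or else I must argue that the process is genuinely a bounded submartingale plus a summable perturbation, so Doob still gives a.s. convergence, and then a separate argument (using the ``high entropy / low entropy'' gap and $\psi_0<\infty$) forces the limit to be in $\{0,1\}$ almost surely. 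Once a.s. convergence to $\{0,1\}$ is in hand, bounded convergence turns it into convergence of the averaged entropies, and matching the limiting fraction to $\mathcal{H}_{X|Y}$ is routine via $\frac1N\sum_i H(U_i|U_1^{i-1}Y_1^N)=\frac1N H(U_1^N|Y_1^N)=\frac1N H(X_1^N|Y_1^N)\to\mathcal{H}_{X|Y}$.

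**Finally, I would** note that the harder, quantitative refinement — the $O(2^{-\sqrt N})$ rate — is deliberately \emph{not} needed for this theorem, which is only the ``weak'' (density) statement; here I only need the measure of the near-extremal indices to converge, so bounded a.s. convergence of the tree process suffices and I would avoid any rate estimates at this stage.
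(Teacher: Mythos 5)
Your outline reproduces the paper's skeleton correctly: set up $H_n$ along a random branch, show it converges, argue the limit lies in $\{0,1\}$, and identify the limiting fractions via $\tfrac1N H(U_1^N\mid Y_1^N)=\tfrac1N H(X_1^N\mid Y_1^N)\to\mathcal H_{X|Y}$. You also correctly flag that no rate estimates are needed here. But there are two concrete places where the proposal goes wrong or leaves a genuine hole.

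First, the supermartingale property does \emph{not} hold only approximately, and no ``summable perturbation'' argument is needed. By stationarity $H(U_i\mid Q_i)=H(V_i\mid R_i)$, and by subadditivity and conditioning one has
\[
H(U_i+V_i\mid Q_iR_i)+H(V_i\mid Q_iR_i,U_i+V_i)=H(U_i,V_i\mid Q_iR_i)\le H(U_i\mid Q_i)+H(V_i\mid R_i)=2H(U_i\mid Q_i),
\]
with no reference to $\psi_0$. So $H_n$ is exactly a bounded supermartingale and Doob gives a.s.\ convergence immediately; your worry about closing Doob under a perturbation, and about the slack needing to ``vanish in Cesàro average,'' is misdirected. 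What the $\psi_0$-bound on mutual information actually buys (the paper's Lemma~\ref{lem:almost-independent}) is the converse direction: the supermartingale gap is close to the memoryless ``conservation'' for almost all $i$, which is used to show the $+$ branch \emph{increases} the entropy, not to establish convergence.

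Second, and more seriously, you leave the core step --- that $H_\infty\in\{0,1\}$ a.s.\ --- as ``a separate argument\ldots forces the limit,'' without the key ingredients. Near-independence of $U_i$ and $V_i$ given $(Q_i,R_i)$ is not enough, because $Q_i$ and $R_i$ are \emph{not} close to independent (they share the correlated pair $Y_N,Y_{N+1}$), so $H(U_i\mid Q_i=q_i)$ and $H(V_i\mid R_i=r_i)$ could ``collude'' to be simultaneously $0$ or simultaneously $1$ with probability one, giving no gain on the $+$ branch even though the marginal entropy is intermediate. Ruling this out is precisely the paper's Lemma~\ref{lemm:nostuck}, which is a stationarity-plus-mixing argument (comparing blocks one, two, and three apart and invoking $\psi_N\to1$, not merely $\psi_0<\infty$). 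You never mention this obstruction, nor the auxiliary distribution $(\tilde U_i,\tilde V_i)$ of \eqref{eq:uvtilde} that makes the conditional independences exact, nor the elementary strict-increase Lemma~\ref{lem:trivial}. Finally, your alternative plan --- tracking $\psi$-mixing through the recursion and proving the synthesized sub-processes are again $\psi$-mixing with bounded parameter --- is not how the paper proceeds and is unlikely to work as stated: the synthesized objects $(U_i\mid Q_i)$ are not stationary processes in a sense where ``$\psi$-mixing'' is even defined, and the paper deliberately sidesteps this by always applying $\psi_0$ and $\psi_N$ directly to adjacent length-$N$ blocks of the \emph{original} process at each level.
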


\begin{theo}[Fast polarization of the low-entropy set]
\label{thm:strong}
Let $ (X_i,Y_i)$, $i \in \mathbb{Z}$, be a promptly $\psi$-mixing process, then for all $\beta<1/2$
\begin{align*}
\lim_{N\to\infty}
	\frac1N 
	\big|\big\{
	i: Z(U_i|U_1^{i-1},Y_1^N)<2^{-N^\beta}
	\big\}\big|
	= 1-\mathcal{H}_{X|Y} \; .
\end{align*}
\end{theo}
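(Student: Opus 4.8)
The plan is to follow the classical Arıkan--Telatar two-phase argument, adapted to the mixing setting via Theorem~\ref{thm:weak}. Recall that the polar transform is built recursively: writing $U_1^N = X_1^N \BN \GN$, each coordinate $i$ corresponds to a length-$n$ binary string $b_1\cdots b_n$ recording, at each level, whether the ``minus'' (check) or ``plus'' (variable) branch was taken. We track the random process $H_n = H(U_{I}|U_1^{I-1}Y_1^N)$, where $I$ is chosen uniformly, equivalently a martingale-like process $H_0, H_1, \dots, H_n$ governed by i.i.d.\ fair coin flips selecting minus/plus at each level. The two phases are: (a) a \emph{warm-up} phase of $\delta n$ levels after which, by Theorem~\ref{thm:weak} together with a bootstrapping argument, the fraction of coordinates with $H$ (equivalently $Z$) below a small constant $\theta$ is close to $1-\mathcal{H}_{X|Y}$; and (b) a \emph{fast-decay} phase over the remaining $(1-\delta)n$ levels, during which, conditioned on having entered the good set, $Z$ is squared with probability $\tfrac12$ (minus branch) at each further polarization step, driving it doubly-exponentially to $0$.

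First I would establish the one-step bounds for the Bhattacharyya parameter in the presence of memory. For the plus transform, $Z(U_{2i}|U_1^{2i-1}Y_1^{2N}) \le Z(U_i|U_1^{i-1}Y_1^N)^2$ holds as in the memoryless case, since conditioning only on ``past'' $U$'s and $Y$'s respects the recursive structure; for the minus transform one has $Z(U_{2i-1}|\cdot) \le 2 Z(U_i|\cdot)$, and the crucial point is that these inequalities are \emph{local} to the block of coordinates being combined and do not see the cross-block dependence. I would verify that the $\psi$-mixing hypothesis is not even needed for these per-step inequalities — it enters only through Theorem~\ref{thm:weak}, which supplies the initial concentration. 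Next, using Theorem~\ref{thm:weak}, for any $\theta>0$ and any $\epsilon'>0$ there is $n_0$ such that for $n\ge n_0$ the fraction of indices $i\in[N]$ with $Z(U_i|U_1^{i-1}Y_1^N)<\theta$ is at least $1-\mathcal{H}_{X|Y}-\epsilon'$ (translating between the $H$-statement of Theorem~\ref{thm:weak} and $Z$ via the standard inequalities $Z^2 \le \dots \le H$-type bounds, e.g.\ $H(A|B)\le \log(1+Z(A|B))$ and $Z(A|B)^2 \le H(A|B)$ up to constants, cf.\ \cite[Proposition~2.2]{Sasoglu2011}).

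Then I would run the martingale argument on the second phase. Fix $\beta<1/2$ and pick $\beta<\beta'<1/2$. Start the process at level $m=\delta n$ from a coordinate in the good set (probability $\ge 1-\mathcal{H}_{X|Y}-\epsilon'$ by the previous step), so $Z_m < \theta$. For the remaining $n-m$ levels, each minus step squares $Z$ and each plus step at most doubles it; a Chernoff bound shows that with probability $\ge 1 - 2^{-c(n-m)}$ the number of minus steps among the last $n-m$ is at least $(\tfrac12-\eta)(n-m)$ and moreover they are ``spread out'' enough, which by the standard Arıkan--Telatar computation gives $Z_n \le 2^{-2^{(\tfrac12-\eta)(n-m)}} \le 2^{-N^{\beta'}} \le 2^{-N^\beta}$ for $n$ large, provided $\delta$ and $\eta$ are chosen small. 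Taking $\epsilon'\to 0$, $\delta\to 0$ yields the liminf bound $\ge 1-\mathcal{H}_{X|Y}$; the matching limsup bound is immediate since $Z(U_i|U_1^{i-1}Y_1^N) \to 0$ forces $H(U_i|U_1^{i-1}Y_1^N)\to 0$, and Theorem~\ref{thm:weak} caps the fraction of such indices at $1-\mathcal{H}_{X|Y}$.

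The main obstacle I anticipate is the \emph{bootstrapping} needed to get from the weak polarization of Theorem~\ref{thm:weak} — which only gives concentration near $0$ and $1$, with a vanishing but not quantified middle — to a statement usable as the initial condition of the fast phase, uniformly over the recursion depth. Concretely, Theorem~\ref{thm:weak} is an asymptotic statement about a single large blocklength, but the two-phase argument wants to treat the length-$2^m$ sub-blocks at the end of phase (a) as fresh starting points and invoke the per-step $Z$-inequalities thereafter; one must check that applying Theorem~\ref{thm:weak} at depth $m$ (blocklength $2^m$) and then polarizing $n-m$ more levels is legitimate, i.e.\ that $H(U_i|U_1^{i-1}Y_1^N)$ for the full length-$N$ code equals the corresponding quantity obtained by first forming the length-$2^m$ transform and then continuing — this is exactly the recursive tower property of the polar construction and holds because the $U$-coordinates are a fixed linear function of $X_1^N$, but it needs to be spelled out so that the $\psi$-mixing structure of $(X_i,Y_i)$ is invoked at the right scale. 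I expect the resolution to be the same device used in \cite[Chapter~5]{Sasoglu2011}: the per-step inequalities and the Chernoff concentration are purely combinatorial and memory-blind, so \emph{all} of the memory-dependence is quarantined inside the single invocation of Theorem~\ref{thm:weak}.
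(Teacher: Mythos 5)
Your proposed proof has a genuine gap at its foundation. You assert that the one-step Bhattacharyya inequalities
\[
Z(U_{2i}\mid U_1^{2i-1}Y_1^{2N})\le Z(U_i\mid U_1^{i-1}Y_1^N)^2,
\qquad
Z(U_{2i-1}\mid U_1^{2i-2}Y_1^{2N})\le 2\,Z(U_i\mid U_1^{i-1}Y_1^N),
\]
``hold as in the memoryless case'' because they are ``local to the block of coordinates being combined and do not see the cross-block dependence,'' and you go further and say that you would verify that $\psi$-mixing is not needed for them. This is false, and it is exactly the place where the $\psi$-mixing hypothesis enters the paper's proof. The standard derivations of these two bounds rest on the (conditional) independence of $(U_i,Q_i)$ and $(V_i,R_i)$, i.e.\ of the two half-blocks being combined; in the present setting these are $(U_1^N,Y_1^N)$-measurable and $(X_{N+1}^{2N},Y_{N+1}^{2N})$-measurable quantities, and they are \emph{not} independent. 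Without independence the ``plus'' bound can simply fail: take $Q_i,R_i$ trivial and $U_i=V_i\sim\Ber(p)$ with $p\neq 1/2$; then $Z(V_i\mid U_i+V_i)=2\sqrt{p(1-p)} > 4p(1-p)=Z(U_i)^2$. So the per-step inequalities are not memory-blind, and the two-phase argument built on them does not close.

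The paper repairs this in a way you did not anticipate. It introduces the auxiliary variables $\hat U_i,\hat V_i,\hat Q_i,\hat R_i$ obtained from a pair $\hat X_1^{2N},\hat Y_1^{2N}$ distributed as the \emph{product} $P_{X_1^NY_1^N}\cdot P_{X_{N+1}^{2N}Y_{N+1}^{2N}}$. For these hatted variables the two half-blocks are independent by construction, so the classical bounds $Z(\hat U_i+\hat V_i\mid\hat Q_i\hat R_i)\le 2Z(\hat U_i\mid\hat Q_i)$ and $Z(\hat V_i\mid\hat Q_i\hat R_i,\hat U_i+\hat V_i)\le Z(\hat U_i\mid\hat Q_i)^2$ do hold. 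The $\psi$-mixing assumption then provides the comparison $P_{AB}\le\psi_0\,P_{\hat A\hat B}$ for any measurable function of the two half-blocks, which via the definition of $Z$ immediately gives $Z(A\mid B)\le\psi_0\,Z(\hat A\mid\hat B)$. Combining, one obtains $Z_n\le 2\psi_0 Z_{n-1}$ on the minus branch and $Z_n\le\psi_0 Z_{n-1}^2$ on the plus branch, i.e.\ the recursion holds with constant $K=2\psi_0<\infty$. At that point the paper invokes the generalized Ar\i kan--Telatar result (\cite[Lemma~2.3]{Sasoglu2011}) as a black box: since $Z_n$ converges a.s.\ to a $\{0,1\}$-valued limit (from Theorem~\ref{thm:weak} and the standard relations between $H$ and $Z$), and the $K$-recursion holds, one gets $\lim_n P(Z_n<2^{-2^{n\beta}})=P(Z_\infty=0)=1-\mathcal{H}_{X|Y}$. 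Your worry about ``bootstrapping'' at an intermediate depth $m$ is therefore moot: the black-box lemma works directly with the process $(Z_n)$ and never restarts at an intermediate scale, so the tower-property justification you sketch is unnecessary. The genuine work you need to supply, and currently omit, is the $\psi_0$-weighted transfer from the product measure to the true one; with $K=2\psi_0$ in hand, a hand-rolled two-phase argument would also work, but the claim that $K=2$ suffices is wrong.
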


We conjecture that an analog of Theorem~\ref{thm:strong} holds for the high-entropy set. 
\begin{conj}[Fast polarization of the high-entropy set]
\label{conj:strong}
Let $ (X_i,Y_i)$, $i \in \mathbb{Z}$, be a promptly $\psi$-mixing process, then for all $\beta<1/2$
\begin{align*}
\lim_{N\to\infty}
	\frac1N 
	\big|\big\{
	i: Z(U_i|U_1^{i-1},Y_1^N)> 1 - 2^{-N^\beta}
	\big\}\big|
	= \mathcal{H}_{X|Y} \; .
\end{align*}
\end{conj}

Resolving the above conjecture would be an important step for polar codes. We refer the reader to \cite[Theorem 13]{ShuvalTal:17.2p}, which shows that the conjecture indeed holds if the process is FAIM. To recap, assuming that the process $(X_i,Y_i)$ is governed by an underlying state sequence having a certain structure allows one to prove Conjecture~\ref{conj:strong}. However, we will \emph{not} assume an underlying state sequence when proving Theorems~\ref{thm:weak} and \ref{thm:strong}.

As a concrete example of the distinction between promptly $\psi$-mixing and FAIM processes, consider the family of processes  given in \cite[Example 3]{Bradley:99p}. Each such process $(X'_i)$, $i \in \mathbb{Z}$, is $\psi$-mixing, with $\psi_0 < \infty$. Also, the support of each $X'_i$ is $[0,1)$. Next, fix such a process, and let $B$ be some Borel set on $[0,1)$. For example, $B = [0,1/2]$. Define the process $(X_i)$, $i \in \mathbb{Z}$, such that $X_i = 1$ if $X'_i \in B$, and $X_i= 0$ otherwise. Since the process $(X_i)$ is a marginalization of $(X'_i)$, we deduce from (\ref{eqn:psi-mixing}) that $(X_i)$ is also $\psi$-mixing, with finite $\psi_0$. That is, we deduce that $(X_i)$ is promptly $\psi$-mixing, and hence Theorems~\ref{thm:weak} and \ref{thm:strong} are applicable. However, since the underlying process $(X'_i)$ is not finite state, it is not FAIM, and thus it is not clear if Conjecture~\ref{conj:strong} holds for $(X_i)$.

The following theorem shows an example of a process that has memory and that \emph{does not} polarize because it is periodic.
\begin{figure}
\begin{center}
\includegraphics[scale=0.7]{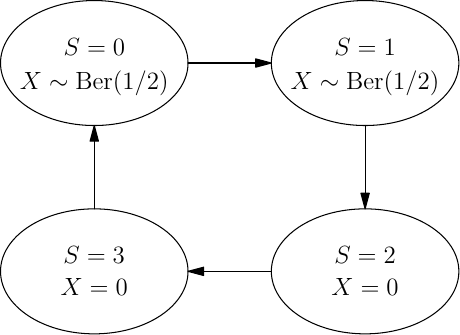}
\end{center}
\caption{A periodic data source that does not polarize.  The source
output is Bernoulli $1/2$ for two consecutive states and zero for next
two consecutive states.  There is no side information, i.e., $Y_i$ is
constant.}
\label{fig:period4}
\end{figure}
\begin{theo}[Periodic processes may not polarize]
\label{thm:periodic}
The stationary periodic Markov process described in
Figure~\ref{fig:period4} does not polarize. Indeed, for all $\frac{5N}{8} < i \leq \frac{6N}{8}$,
\begin{equation}
\left|H(U_i|U_1^{i-1}) - \frac{1}{2}\right| \leq \epsilon_N \; , \quad \lim_{N \to \infty} \epsilon_N = 0 \; .
\end{equation}
\end{theo}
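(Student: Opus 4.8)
The plan is to reduce the claim to a concrete feature of Ar\i kan's transform after two rounds of recursion, and then to an elementary probability estimate. The process of Figure~\ref{fig:period4} can be generated by drawing a phase $\Phi$ uniformly on $\{0,1,2,3\}$ (its stationary state) and then, conditioned on $\Phi$, letting $X_1^N$ be i.i.d.\ $\Berhalf$ on a deterministic set of positions and $0$ on the rest, where inside every length-$4$ block $(X_{4k-3},X_{4k-2},X_{4k-1},X_{4k})$ the Bernoulli/zero pattern is one of the four cyclic shifts of $BB00$ --- the \emph{same} shift for all $k$, selected by $\Phi$ --- and where $Y_i$ is constant. Writing $s_j=x_{2j-1}+x_{2j}$ and $t_j=x_{2j}$, Ar\i kan's recursion $x_1^{2M}\mathsf{B}_{2M}\mathsf{G}_{2M}=\big(s_1^{M}\mathsf{B}_{M}\mathsf{G}_{M},\,t_1^{M}\mathsf{B}_{M}\mathsf{G}_{M}\big)$ \cite{Arikan2009}, applied twice, shows that $U_1^N$ splits into four consecutive blocks of length $N/4$, the $r$-th being $\mathsf{B}_{N/4}\mathsf{G}_{N/4}$ applied to, respectively,
\[
a_k=\textstyle\sum_{l=0}^{3}X_{4k-l},\qquad b_k=X_{4k-1}+X_{4k},\qquad c_k=X_{4k-2}+X_{4k},\qquad d_k=X_{4k}
\]
(mod $2$, $k=1,\dots,N/4$). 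In particular, every index $i$ with $\tfrac{5N}{8}<i\le\tfrac{6N}{8}$ lies in the third block and selects one of its last $N/8$ coordinates.

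The next step is the (routine, finite) computation of the joint law of $(a,b,c,d)$ under each of the four shifts, which reveals a clean dichotomy. For $BB00$ and $00BB$ the sequence $c$ is i.i.d.\ $\Berhalf$ and independent of $(a,b)$; for $0BB0$ and $B00B$ one has $c=a+b$ and $c=b$ respectively, so $c$ --- hence its transform, the third block of $U_1^N$ --- is a deterministic function of $(a,b)$, which in turn is a deterministic function of the first half $U_1^{N/2}$. Therefore, for $i$ in the stated range, $H(U_i|U_1^{i-1},\Phi)$ equals $1$ on $BB00$ and $00BB$ (the third block of $U_1^N$ is then i.i.d.\ $\Berhalf$ and independent of $U_1^{N/2}$) and $0$ on $0BB0$ and $B00B$ (the third block is then a function of $U_1^{N/2}$). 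As $\Phi$ is uniform this gives $H(U_i|U_1^{i-1},\Phi)=\tfrac12$ \emph{exactly}; combining with $H(U_i|U_1^{i-1})=H(U_i|U_1^{i-1},\Phi)+I(U_i;\Phi|U_1^{i-1})$ and $0\le I(U_i;\Phi|U_1^{i-1})\le H(\Phi|U_1^{i-1})$, the theorem reduces to showing $H(\Phi|U_1^{i-1})=o(1)$, uniformly over the range.

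For this last point, note that for $i$ in the range $U_1^{i-1}$ reveals all of $U_1^{N/2}$ --- hence $(a,b)$, and hence $\mathsf{B}_{N/4}\mathsf{G}_{N/4}$ applied to $a$ --- together with the first $k\ge N/8$ coordinates of $\mathsf{B}_{N/4}\mathsf{G}_{N/4}$ applied to $c$. I would then bound, for each of the three shifts other than the true one, the probability that these observations are consistent with it: such a shift either forces $b=0$ or $b=a$, which the true $(a,b)$ violates except with probability $2^{-N/4}$, or it predicts a value of $c$ that differs from the true one by a \emph{uniformly distributed} string of length $N/4$, whose transform must then vanish on its first $k\ge N/8$ coordinates --- an event of probability $2^{-k}\le 2^{-N/8}$. (The representative instance is $0BB0$ versus $B00B$: their predictions $c=a+b$ and $c=b$ differ, by linearity, exactly on the support of $\mathsf{B}_{N/4}\mathsf{G}_{N/4}$ applied to $a$, a uniformly random subset of $\{1,\dots,N/4\}$.) A union bound over the three shifts yields $\Pr\{\Phi\text{ not a function of }U_1^{i-1}\}=O(2^{-N/8})$, hence $H(\Phi|U_1^{i-1})=O(2^{-N/8})$ uniformly in $i$; taking $\epsilon_N=O(2^{-N/8})$ finishes the proof.

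Most of the work, and the place mistakes are easiest to make, is in the first two steps: pinning down the two-level block decomposition of $U_1^N$ and correctly tabulating $(a,b,c,d)$ for each shift; after that the estimates are elementary. It is also worth noting where the window $\tfrac{5N}{8}<i\le\tfrac{6N}{8}$ enters: it is exactly what guarantees $k\ge N/8$ coordinates of the third block are observed, which is what separates the two shifts for which $c$ is deterministic. For $i$ only $O(1)$ above $N/2$ these two shifts induce the same law on $(a,b,U_{N/2+1}^{i-1})$, $\Phi$ is not asymptotically determined, and the conclusion $H(U_i|U_1^{i-1})\to\tfrac12$ would in fact fail.
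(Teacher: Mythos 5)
Your proposal is correct and takes a genuinely different route from the paper. The two proofs share the same high-level strategy --- show $H(U_i\mid U_1^{i-1},S_1)=\tfrac12$ exactly (it is $1$ for two initial states and $0$ for the other two), show $S_1$ is essentially determined by $U_1^{i-1}$, and combine via the chain rule --- but the mechanics differ substantially.

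The paper establishes both ingredients by induction on $N$, with a base case $N=8$ verified by tabulating $U_1^6$ (Tables~\ref{tbl:BB00} and~\ref{tbl:differentiateStates}); the state-identification lemma in particular proceeds by extracting from $U_1^{i-1}$ the set $\{X_{1+8(j-1)}^{8j}\BEight\GEight\}_{j\le N/8}$'s length-$5$ prefixes, which are i.i.d.\ given $S_1$, and then applying a majority-type classifier. Your approach instead applies Ar\i kan's recursion twice to split $U_1^N$ into four consecutive length-$N/4$ blocks, the $\BN[/4]$-transforms of $a_k,b_k,c_k,d_k$. I checked all four shift cases: for $BB00$ ($s_1{=}0$) one gets $(a_k,b_k,c_k,d_k)=(X_{4k-3}{+}X_{4k-2},\,0,\,X_{4k-2},\,0)$ so $c$ is i.i.d.\ and independent of $(a,b)$; for $00BB$ ($s_1{=}2$) one gets $a=b$ and $c=d$ i.i.d.\ independent of $a$; for $0BB0$ ($s_1{=}3$) one gets $c=a+b$; for $B00B$ ($s_1{=}1$) one gets $c=b$. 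This matches Lemma~\ref{lemm:noPolarizationWhenPeriodicGivenState} exactly (indeed it gives it for every index in the third block, not just the last eighth) and makes the dichotomy transparent rather than inductive. Your state-identification argument --- union bound over the three wrong shifts, each ruled out either by a linear constraint on $(a,b)$ (violated except w.p.\ $2^{-N/4}$) or by a predicted $c$ whose transform-prefix of length $k\ge N/8$ must vanish (probability $2^{-k}$) --- is cleaner than the paper's and isolates exactly why the window $\tfrac{5N}{8}<i\le\tfrac{6N}{8}$ is used: it guarantees $k\ge N/8$, which is what separates $0BB0$ from $B00B$.

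One minor slip: passing from $p_e=O(2^{-N/8})$ to $H(\Phi\mid U_1^{i-1})$ via Fano gives $O(N2^{-N/8})$, not $O(2^{-N/8})$; this is still $o(1)$ so it does not affect the theorem, but the stated constant for $\epsilon_N$ should carry the extra factor of $N$.
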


\section{Notation}
We will prove the above theorems in the following sections.  Throughout, we will 
use the shorthand
\begin{align*}
H^\mathbf{b}&=H(U_i|U_1^{i-1},Y_1^N) \; ,\\
Z^\mathbf{b}&=Z(U_i|U_1^{i-1},Y_1^N) \; ,
\end{align*}
where $\mathbf{b}\in\{0,1\}^n$ is the $n$-bit binary expansion of
$i-1\in\{0,\dotsc,N-1\}$.  We will omit the ranges of indices when
they are clear from context.  The following are immediate from 
the definition of $\BN\GN$:
\begin{align*}
H^{\mathbf{b}0}&=H(U_{2i-1}|U_1^{2i-2},Y_1^{2N})\\
H^{\mathbf{b}1}&=H(U_{2i}|U_1^{2i-1},Y_1^{2N})
\end{align*}
for all $\mathbf{b}\in\{0,1\}^n$.  These identities also hold when the
$H$'s are replaced by $Z$'s. Further, if we let $B_1,B_2,\dotsc$ be a sequence of i.i.d.\
$\Berhalf$ random variables, then it is easy to see that the random variables $H_n=H^{B_1\dotsc B_n}$
and $Z_n=Z^{B_1\dotsc B_n}$ are uniformly distributed over the sets of
$H^\mathbf{b}$'s and $Z^\mathbf{b}$'s, respectively.
Theorems
\ref{thm:weak}~and~\ref{thm:strong} are then equivalent to
\begin{theo}
    \label{thm:weakAlt}
Let $ (X_i,Y_i)$, $i \in \mathbb{Z}$, be a promptly $\psi$-mixing process, then for all $\epsilon>0$
\begin{align*}
\lim_{n\to\infty}\Pr(H_n>1-\epsilon)=\mathcal{H}_{X|Y} \; ,\\
\lim_{n\to\infty}\Pr(H_n<\epsilon)=1-\mathcal{H}_{X|Y} \; .
\end{align*}
\end{theo}

\begin{theo}
Let $ (X_i,Y_i)$, $i \in \mathbb{Z}$, be a promptly $\psi$-mixing process, then for all $\beta<1/2$
\begin{align*}
\lim_{n\to\infty}\Pr(Z_n<2^{-N^\beta})=1-\mathcal{H}_{X|Y} \; .
\end{align*}
\end{theo}

As is usual in proofs of polarization, we will analyze how the 
entropies and Bhattacharyya parameters evolve in a single recursion of
the polarization transform. That is, when two smaller polarization
blocks are combined to form a larger block.  Due to the dependence
between the combined blocks, we will need to keep track of more
random variables than is required in the analysis of the memoryless
case.  The following shorthand will then be useful:
\begin{align}
\label{eqn:shorthand}
\begin{split}
U_1^N & = X_{1}^{N} \BN \GN \; , \\  
V_1^N & = X_{N+1}^{2N} \BN \GN \; , \\
Q_i   & = (U_1^{i-1}, Y_1^N) \; ,  \\
R_i   & = (V_1^{i-1}, Y_{N+1}^{2N}) \; . 
\end{split}
\end{align}

\section{Proof of Theorem~\ref{thm:weak}} 
Throughout this section, we assume that $(X_i,Y_i)$, $i \in \mathbb{Z}$, is a promptly $\psi$-mixing process. We will prove Theorem~\ref{thm:weak} by showing that $H_n$ converges almost surely (a.s.) and in $L^1$ to a $\{0,1\}$-valued random variable $H_\infty$. As in \cite{Arikan2009}, we first show that $H_\infty \in [0,1]$. 
\begin{lemm}
    \label{lemm:supermartingale}
    The sequence $H_n$ converges a.s. and in $L^1$ to a random variable $H_\infty \in [0,1]$.
\end{lemm}
\begin{IEEEproof}
    Recall that for $i = 1 + (B_1\dotsc B_n)_2$ we have that
$$
H_n=H(U_i|U_1^{i-1},Y_1^N)=H(U_i|Q_i) \; .
$$ 
Also, for $i$ as above, 
\begin{align*}
H_{n+1}=
\begin{cases}
H(U_i+V_i|Q_i,R_i)
	&\text{if } B_{n+1}=0 \; ,\\
H(V_i|Q_i,R_i,U_i+V_i)
	&\text{if } B_{n+1}=1 \; .
\end{cases}
\end{align*}
Next, note that
\[
H(U_i+V_i|Q_i,R_i)
	+H(V_i|Q_i,R_i,U_i+V_i) 
=H(U_i,V_i|Q_i,R_i) \\
\le H(U_i|Q_i) + H(V_i|R_i) = 2H(U_i|Q_i) \; ,
\]
where the inequality follows since conditioning reduces entropy, and the last step follows from stationarity. Thus, since $B_{n+1}$ is uniform, $E[H_{n+1}|H_1,\dotsc,H_n]\le H_n$. 
The entropy is bounded, $H_n\in[0,1]$, and thus it follows that $H_1,H_2,\dotsc$ is a bounded
supermartingale. We conclude by \cite[Theorem 9.4.5]{Chung:01b} 
that it converges almost surely  and in $L^1$  to a $[0,1]$-valued
random variable~$H_\infty$. 
\end{IEEEproof}

Our approach to proving that $H_\infty \in \{0,1\}$ shares similarities with the proof in \cite[Section 2.2]{Sasoglu2012} for the memoryless case. In essence, the proof there hinges on \cite[Lemma 2.2]{Sasoglu2012}, which shows that if $H(U_i|V_i)$ is bounded away from both $0$ and $1$, then $H(U_i+V_i|Q_i,R_i) - H(U_i|Q_i)$ is bounded away from $0$. Informally, if $H_n$ has not polarized, then it has not converged. Thus, our main focus now is on $H(U_i+V_i|Q_i,R_i)$.  

Recalling the definitions of $Q_i$ and $R_i$ in (\ref{eqn:shorthand}), we see that $Y_N \in Q_i$ and $Y_{N+1} \in R_i$. Since $Y_N$ and $Y_{N+1}$ are generally dependent, we deduce that $Q_i$ and $R_i$ are generally dependent as well. However, suppose that $U_i$ and $V_i$ were independent given $Q_i$ and $R_i$. This is not generally true, but if it were, we would be closer to the memoryless setting and our task of analyzing $H(U_i+V_i|Q_i,R_i)$ would be simpler. Informally, inequality (\ref{eq:UV-almost-independent}) in the next lemma shows that this is ``almost true''.

\begin{lemm}
\label{lem:almost-independent}
For any $\epsilon>0$, the fraction of
indices~$i$ for which
\begin{align}
I(U_i;V_i|Q_i,R_i)&<\epsilon \label{eq:UV-almost-independent} \; , \\
I(U_i;R_i|Q_i)&<\epsilon \label{eq:UR-almost-independent} \; , \\
I(V_i;Q_i|R_i)&<\epsilon\label{eq:VQ-almost-independent} \; ,
\end{align}
approaches~$1$ as $N\to\infty$.
\end{lemm}

\begin{IEEEproof}
We only prove the first and the third inequalities, the second follows by symmetry.
We have
\begin{align*}
\log(\psi_0)
	&\ge E\left[\log
\frac{p_{X_1^{2N},Y_1^{2N}}(X_1^{2N},Y_1^{2N})}{p_{X_1^N,Y_1^N}(X_1^N,Y_1^N)\cdot
p_{X_{N+1}^{2N},Y_{N+1}^{2N}}(X_{N+1}^{2N},Y_{N+1}^{2N})}\right]\\
&=I(X_1^N,Y_1^N; X_{N+1}^{2N},Y_{N+1}^{2N})\\
&=I(U_1^N,Y_1^N; V_{1}^{N},Y_{N+1}^{2N})\\
&=I(Y_1^N; V_{1}^{N},Y_{N+1}^{2N}) + I(U_1^N; V_{1}^{N},Y_{N+1}^{2N}|Y_1^N)\\
&\ge I(U_1^N; V_1^N,Y_{N+1}^{2N}|Y_1^N)\\
&=\sum_{i=1}^N
	I(U_i; V_1^N,Y_{N+1}^{2N}|Y_1^N,U_1^{i-1}) \\
&=\sum_{i=1}^N
I(U_i;  R_i, V_i, V_{i+1}^N|Q_i),
\end{align*}
The first inequality above follows from the definition of $\psi_0$.
Since all terms inside the last sum are non-negative, it follows that at
most $\sqrt{\log(\psi_0)N}$ (a vanishing fraction) of them 
are at least $\sqrt{\log(\psi_0)/N}$ (a vanishing quantity). Thus, to conclude the proof, it suffices to show that the $i$th term is 
greater than both $I(U_i;R_i|Q_i)$ and $I(U_i;V_i|Q_i,R_i)$. Indeed, 
\[
I(U_i; R_i, V_i,   V_{i+1}^N|Q_i) = I(U_i; R_i |Q_i) + I(U_i; V_i |Q_i,R_i) + I(U_i;  V_{i+1}^N|Q_i,V_i, R_i ),
\]
and all the terms are non-negative.
\end{IEEEproof}
 
In fact (\ref{eq:UV-almost-independent}) is the only inequality we will need from Lemma~\ref{lem:almost-independent}. We have stated (\ref{eq:UR-almost-independent}) and (\ref{eq:VQ-almost-independent}) to serve as motivation for the following. Namely, for $1 \leq i \leq N$, we now introduce the random variables $\tilde{U}_i$ and $\tilde{V}_i$. The joint distribution of $(X_1^{2N}, Y_1^{2N}, U_1^N, V_1^N, Q_1^N, R_1^N, \tilde{U}_1^N, \tilde{V}_1^N)$ is defined as follows. First $X_1^{2N}$ and $Y_1^{2N}$ are picked according to the process distribution. This uniquely determines the values of $U_1^N, V_1^N, Q_1^N$, and $R_1^N$, according to  (\ref{eqn:shorthand}). Finally, for each $i=1,2,\ldots,n$ we pick $\tilde{U}_i$ and $\tilde{V}_i$ independently according to the marginal distributions $p_{U_i|Q_i}(\cdot |q_i)$ and $p_{V_i|R_i}( \cdot |r_i)$, where $q_i$ and $r_i$ are the realizations of $Q_i$ and $R_i$. The key property to note is that the joint distribution of $(\tilde{U}_i,\tilde{V}_i)$ with $(Q_i,R_i)$ is of the form
\begin{equation} 
\label{eq:uvtilde}
p_{\tilde{U}_i,\tilde{V}_i,Q_i,R_i}(\tilde{u}_i,\tilde{v}_i,q_i,r_i)\\
=p_{U_i|Q_i}(\tilde{u}_i|q_i)
p_{V_i|R_i}(\tilde{v}_i|r_i)
	 p_{Q_i,R_i}(q_i,r_i) \; .
\end{equation}
Thus, by definition, $\tilde{U}_i$ and $\tilde{V}_i$ are independent given $Q_i$ and $R_i$. In fact, more is true: 
if we replace $U_i$ and $V_i$ by $\tilde{U}_i$ and
$\tilde{V}_i$, respectively, in (\ref{eq:UV-almost-independent})--(\ref{eq:VQ-almost-independent}), then all the mutual informations become
zero. See (\ref{eq:tildeUIndepenedentOfRiGivenQi})--(\ref{eq:tildeUtildeVIndependentGivenQiRi}) in the appendix for a proof of this fact.

%
%
As explained, it will be easier to analyze $H(\tilde{U}_i+\tilde{V}_i|Q_i,R_i)$ in place of $H(U_i+V_i|Q_i,R_i)$. The following corollary to Lemma~\ref{lem:almost-independent} serves as justification for this shift, since it shows that the two quantities are ``close''. It is proved in the appendix and will be used later on.  

\begin{coro}
\label{coro:EntropyOfXorDifference}
For any $\epsilon > 0$, the fraction of indices $i$ for which
\begin{equation}
    \label{eq:entropiesEpsilonClose}
|H(\tilde{U}_i+\tilde{V}_i|Q_i,R_i)-H(U_i+V_i|Q_i,R_i)| < \epsilon
\end{equation}
approaches $1$ as $N \to \infty$.
\end{coro}

Note that by (\ref{eq:uvtilde}), 
\begin{equation}
    \label{eq:UiTildeAlone}
H(\tilde{U_i} | Q_i,R_i) = H(\tilde{U_i} | Q_i) = H(U_i|Q_i) \; .
\end{equation}
Thus, in light of this and Corollary~\ref{coro:EntropyOfXorDifference}, we will consider $H(\tilde{U}_i+\tilde{V}_i|Q_i,R_i) - H(\tilde{U}_i|Q_i)$ as a proxy for our ultimate quantity of interest, $H(U_i+V_i|Q_i,R_i) - H(U_i|Q_i)$. Note that in order to save space, we will usually prefer writing $H(\tilde{U_i} | Q_i)$ in place of the longer but more informative $H(\tilde{U_i} | Q_i,R_i)$. The same remark applies to $H(\tilde{V_i} | Q_i)$ versus $H(\tilde{V_i} | Q_i,R_i)$, which are also equal due to  (\ref{eq:uvtilde}).

Recall that we aim to mimic the memoryless proof in \cite[Section 2.2]{Sasoglu2012} as much as possible. Hence our informal strategy will soon be the following: show that if $H(\tilde{U}_i|Q_i)$ is bounded away from both $0$ and $1$, then  $H(\tilde{U}_i+\tilde{V}_i|Q_i,R_i) - H(\tilde{U}_i|Q_i)$ is bounded away from $0$.

We now motivate the following lemma. Namely, we will now introduce an apparent difficulty, which the following lemma will resolve. Recall that we prefer analyzing $\tilde{U}_i$ and $\tilde{V}_i$ over  $U_i$ and $V_i$, since the former are independent given $(Q_i,R_i)$. In contrast, as we have already mentioned, $Q_i$ and $R_i$ are generally dependent.
This presents an apparent problem with the strategy outlined in the previous paragraph: suppose $H(\tilde{U}_i|Q_i)$ is bounded away from both $0$ and $1$. Suppose further that for every value $q_i$ that $Q_i$ can take, we have that
$H(\tilde{U}_i|Q_i = q_i)$ is either $0$ or $1$. That is, imagine what is effectively an erasure channel, mapping $\tilde{U}_i$ to $Q_i$.
By stationarity, the same property must hold for $H(\tilde{V}_i|R_i = r_i)$.
Now, since $Q_i$ and $R_i$ are \emph{not}
independent, it is conceivable that they collude, i.e., that it is always the case that the values $q_i$ and $r_i$ that the random variables $Q_i$ and $R_i$ respectively take are such that either $H(\tilde{U}_i|Q_i = q_i) = H(\tilde{V}_i|R_i = r_i) = 0$ or
$H(\tilde{U}_i|Q_i = q_i) = H(\tilde{V}_i|R_i = r_i) = 1$. In other words, in two consecutive uses of the above channel, we always have either two non-erasures or two erasures. In such a case, it is easy to see that $H(\tilde{U}_i+\tilde{V}_i|Q_i,R_i) - H(\tilde{U}_i|Q_i)$ is identically $0$. That is, if the above assumptions are valid, our plan is doomed to fail: we have an apparent counter-example in which $H(\tilde{U}_i|Q_i)$ is bounded away from both $0$ and $1$, yet the difference $H(\tilde{U}_i+\tilde{V}_i|Q_i,R_i) - H(\tilde{U}_i|Q_i)$ is not bounded away from $0$. Informally, an important corollary of the following lemma is that such synchronized erasures cannot happen. That is, as intuition for the following lemma, think of $A=1$ ($B = 1$) as indicating that $Q_i$ ($R_i$) corresponds to an erasure of $\tilde{U}_i$ ($\tilde{V}_i$).

\begin{lemm}
\label{lemm:nostuck}
For all $\xi>0$, there exists $N_0$~and~$\delta(\xi)>0$ such that
for all~$N>N_0$
and all $\{0,1\}$-valued random variables 
$A=f(X_1^N,Y_1^N)$ and 
$B=f(X_{N+1}^{2N},Y_{N+1}^{2N})$,
$$
p_A(1)\in(\xi,1-\xi)
	\quad\text{implies}\quad
	p_{A,B}(1,0)>\delta(\xi) \; .
$$
\end{lemm}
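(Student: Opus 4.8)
The plan is to use the $\psi$-mixing property in the form~\eqref{eqn:psi-mixing} to decouple the events $\{A=0\}$ and $\{B=1\}$, and then to invoke stationarity to relate $p_B(1)$ back to $p_A(0)$. Concretely, $A=f(X_1^N,Y_1^N)$ is measurable with respect to $\sigma((X_i,Y_i)_{i=1}^N)$, and $B=f(X_{N+1}^{2N},Y_{N+1}^{2N})$ is measurable with respect to $\sigma((X_i,Y_i)_{i=N+1}^{2N})$; these two index blocks are separated by gap $k=0$, so~\eqref{eqn:psi-mixing} does not directly apply — but it does apply to the pair of events $\{A=0\}$ and $\{B'=1\}$ where $B'$ is the same function evaluated on a block that is shifted one extra step to the right, i.e. on $(X_i,Y_i)_{i=N+2}^{2N+1}$. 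I would therefore first handle this one-symbol gap. Since the process takes values in a finite alphabet $\mathcal{Y}\times\{0,1\}$, there is a constant $c>0$ (depending only on the process) lower-bounding the probability of any particular value of $(X_{N+1},Y_{N+1})$ that occurs with positive probability; conditioning on such a value and using stationarity, one gets $\Pr(A=0,B=1)\ge c\cdot\Pr(A=0,B'=1\mid (X_{N+1},Y_{N+1})=\text{that value})$, and then a second application of $\psi$-mixing (now with a genuine gap) turns the conditional joint probability into a product. A cleaner route, which I would actually follow, is to avoid the shift entirely: the lemma only asserts existence of \emph{some} $N_0$, so I can apply~\eqref{eqn:psi-mixing} with the two half-blocks separated by, say, inserting the gap inside by taking $A$ to depend on $X_1^{N}$ but noting $B$ depends on $X_{N+1}^{2N}$ — these \emph{are} adjacent, so I will instead prove the statement with a gap of one and absorb the constant $c$ as above. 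Either way the upshot is $p_{AB}(0,1)\ge c'\,p_A(0)\,p_B(1)$ for a process-dependent constant $c'>0$ (taking $k=1$ and noting $\psi_1\ge 1$ gives a lower bound on joint probabilities only after the finite-alphabet trick; the honest inequality in the right direction for a \emph{lower} bound comes from $\Pr(A\cap B)\ge \Pr(A)+\Pr(B)-1$ is too weak, so the finite-alphabet argument is essential).

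Next, by stationarity, $B=f(X_{N+1}^{2N},Y_{N+1}^{2N})$ has the same distribution as $A=f(X_1^N,Y_1^N)$, so $p_B(1)=p_A(1)=1-p_A(0)$. Under the hypothesis $p_A(0)\in(\xi,1-\xi)$ we get both $p_A(0)>\xi$ and $p_B(1)=1-p_A(0)>\xi$, hence $p_{AB}(0,1)\ge c'\,\xi^2=:\delta(\xi)>0$. This is the entire argument modulo the one-symbol gap issue.

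The main obstacle — and the only subtle point — is that $\psi$-mixing as stated gives an \emph{upper} bound $\Pr(A\cap B)\le\psi_k\Pr(A)\Pr(B)$, whereas here I need a \emph{lower} bound on $\Pr(A=0,B=1)$. The resolution is that a $\psi$-mixing process with $\psi_0<\infty$ also satisfies a two-sided bound $\psi_k^{-1}\Pr(A)\Pr(B)\le\Pr(A\cap B)$; indeed, applying the upper bound to the complementary events and expanding shows that $\psi_0<\infty$ forces the lower-mixing coefficient to be positive as well (this is standard for $\psi$-mixing and can be cited from~\cite[Page 169]{Shields1996}). Once the lower bound $\Pr(A\cap B)\ge\psi_0^{-1}\Pr(A)\Pr(B)$ is in hand — applied to the adjacent blocks directly, since $\psi_0$ governs gap $k=0$ — the argument needs no finite-alphabet trick at all: take $A=\{A=0\}$, $B=\{B=1\}$, get $p_{AB}(0,1)\ge\psi_0^{-1}p_A(0)p_B(1)\ge\psi_0^{-1}\xi^2$, and set $\delta(\xi)=\psi_0^{-1}\xi^2$, with $N_0$ arbitrary. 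So the real content is just the observation that $\psi_0<\infty$ yields a matching lower-mixing bound; everything else is stationarity and bookkeeping.
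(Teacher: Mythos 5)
Your proposal rests on the claim that $\psi_0<\infty$ together with $\psi$-mixing yields a \emph{lower} bound of the form $\Pr(A\cap B)\ge \psi_0^{-1}\Pr(A)\Pr(B)$, applied with gap $k=0$ (adjacent blocks), which would make $N_0$ arbitrary. This claim is not correct, and it is the central gap in the argument. The definition in the paper gives only an upper bound. The best you can extract by applying the upper bound to complements, as you suggest, is
\begin{equation*}
\Pr(A\cap B) \;\ge\; \Pr(B)\bigl(1-\psi_k\bigr) + \psi_k\,\Pr(A)\Pr(B),
\end{equation*}
which is a useful lower bound only when $\psi_k < 1/(1-\Pr(A))$. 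Since $\psi_0$ can be arbitrarily large, this fails at $k=0$ --- and so does any version of the claimed two-sided bound there. The bound becomes usable only for gaps $k$ large enough that $\psi_k$ is close to $1$; this is precisely why the lemma statement needs an $N_0$, and why your ``$N_0$ arbitrary'' conclusion should have been a warning sign.

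The missing idea, which the paper's proof supplies, is to manufacture a large gap without changing the quantity being bounded. Introduce $C=f(X_{2N+1}^{3N},Y_{2N+1}^{3N})$, the same function on the third block. By stationarity $p_{AB}(0,1)=p_{BC}(0,1)$ (here it is essential that $A$, $B$, $C$ are all the \emph{same} function $f$, a structural feature your proposal never uses), hence
\begin{equation*}
2\,p_{AB}(0,1) \;=\; p_{AB}(0,1)+p_{BC}(0,1) \;\ge\; p_{ABC}(0,1,1)+p_{ABC}(0,0,1) \;=\; p_{AC}(0,1) \;=\; p_A(0)-p_{AC}(0,0).
\end{equation*}
Now $A$ and $C$ are separated by gap $N$, so the $\psi$-mixing \emph{upper} bound gives $p_{AC}(0,0)\le\psi_N\,p_A(0)p_C(0)=\psi_N\,p_A(0)^2$ by stationarity, and therefore $2\,p_{AB}(0,1)\ge p_A(0)\bigl(1-\psi_N\,p_A(0)\bigr)$. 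Since $p_A(0)\in(\xi,1-\xi)$ and $\psi_N\to1$, this is bounded away from zero once $N$ exceeds a threshold $N_0(\xi)$. That three-block/stationarity step, together with using $\psi_N$ rather than $\psi_0$, is what your proposal is missing. (Your earlier digressions about a one-symbol gap and a finite-alphabet constant $c$ are unnecessary --- the definition already allows adjacent blocks at $k=0$ --- and in any case do not rescue the flawed lower-mixing step.)
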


\begin{IEEEproof}
Let us start by explaining
informally why the claim is true.  Define
$C=f(X_{2N+1}^{3N},Y_{2N+1}^{3N})$, and suppose to the contrary that
$B$ equals $A$ with very high probability. Hence, by
stationarity, $C$ equals $B$ with very high probability. We conclude that $A$ equals $C$ with probability very close to $1$, a contradiction to the mixing property.

Let us now give a formal proof. First, clearly, we may assume that $\xi \leq 1/2$, or else the claim is vacuous. We have
\begin{align*}
2p_{A,B}(1,0)&=p_{A,B}(1,0)+p_{B,C}(1,0)\\
&\ge p_{A,B,C}(1,0,0)+p_{A,B,C}(1,1,0)\\
&=p_{A,C}(1,0)\\
&=p_A(1)-p_{A,C}(1,1)\\
&\ge p_A(1)(1-\psi_Np_C(1))\\
&= p_A(1)(1-\psi_Np_A(1))
\end{align*}
where the first and last equalities are due to stationarity. Recall that $\psi_N$ converges to $1$ from above. We now commit to an $N_0$ such that $\psi_N < \frac{1}{\sqrt{1-\xi}}$ for all $N > N_0$. Recalling that $p_A(1)\in(\xi,1-\xi)$, we can bound the last term in the above displayed equation as 
\begin{multline*}
    p_A(1)(1-\psi_N p_A(1)) > p_A(1)\left(1- \frac{1}{\sqrt{1-\xi}}p_A(1) \right) > p_A(1) \left(1- \frac{1}{\sqrt{1-\xi}}(1-\xi) \right) \\
    = p_A(1)\left(1-\sqrt{1-\xi}\right) >  \xi\left(1-\sqrt{1-\xi}\right) \; ,
\end{multline*}
assuming that $N > N_0$. That is, for all $N > N_0$, we deduce that $2p_{A,B}(1,0) > \xi\left(1-\sqrt{1-\xi}\right)$. Thus, we take $\delta(\xi) = \xi\left(1-\sqrt{1-\xi}\right)/2$.
%
\end{IEEEproof}

The next lemma will be instrumental in the following setting. Let $q_i$ and $r_i$ be given. Assume that $H(\tilde{U}_i|Q_i = q_i)$ and $H(\tilde{V}_i|R_i = r_i)$ are not both close to $0$, nor are they both close to $1$. To emphasize: we only rule out the case where both entropies are close to each other and extremal. Then, we will deduce from the following lemma that $H(\tilde{U}_i + \tilde{V}_i|Q_i = q_i, R_i = r_i)$ is non-negligibly greater than the mean of $H(\tilde{U}_i|Q_i = q_i)$ and $H(\tilde{V}_i|R_i = r_i)$. The proof is given in the appendix.
\begin{lemm}
\label{lem:trivial}
Let $A$~and~$B$ be independent binary random variables. For every
$\xi>0$, there exists $\Delta(\xi)>0$ such that
\[
\max\{H(A),H(B)\}>\xi\quad \text{ and } \quad
\min\{H(A),H(B)\}<1-\xi
\]
imply
$$
H(A+B)>\frac{H(A)+H(B)}{2}+\Delta(\xi).
$$
\end{lemm}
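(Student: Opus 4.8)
The plan is to reduce the statement to a compactness argument on a bounded parameter domain. Parametrize the two independent binary random variables by $a = p_A(1) \in [0,1]$ and $b = p_B(1) \in [0,1]$, so that $H(A) = h(a)$, $H(B) = h(b)$, and $H(A+B) = h(a \ast b)$, where $h$ is the binary entropy function and $a \ast b = a(1-b) + b(1-a)$ is the convolution. Define the ``entropy gain'' function
\[
g(a,b) = h(a \ast b) - \tfrac{1}{2}\bigl(h(a) + h(b)\bigr).
\]
First I would observe that $g$ is continuous on the compact square $[0,1]^2$, and that $g(a,b) \ge 0$ everywhere: this is precisely the statement that Ar\i kan's transform does not decrease average entropy, which follows from the chain-rule inequality already used above, $H(A+B) + H(B \mid A+B) = H(A,B) = H(A) + H(B)$ together with $H(B \mid A+B) \le H(B) = \min$ when suitably ordered, or more directly from concavity of $h$ composed with the fact that $a \ast b$ lies between $a$ and $b$ pushed toward $1/2$. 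I would then characterize the zero set: $g(a,b) = 0$ if and only if $\{a,b\} \subseteq \{0,1\}$ or $a = b \in \{0,1\}$ — in words, the gain vanishes exactly when at least one of $A,B$ is deterministic \emph{and} the other is deterministic too, i.e. the only way to have no strict increase is for both variables to be degenerate. This is the crux: one must show that if, say, $a \in (0,1)$ (so $H(A) > 0$), then $g(a,b) > 0$ for \emph{every} $b$, including $b \in \{0,1\}$; when $b = 0$ this says $h(a) > \tfrac{1}{2} h(a)$, which is trivially true since $h(a) > 0$, and for $b \in (0,1)$ one invokes strict concavity of $h$ to get a strict inequality in $h(a\ast b) \ge$ the appropriate average.

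Now the hypothesis $\max\{H(A),H(B)\} > \xi$ and $\min\{H(A),H(B)\} < 1-\xi$ translates to $(a,b)$ lying in a closed subset $K_\xi \subseteq [0,1]^2$ that excludes a neighborhood of the ``bad corners'' $(0,0)$ and $(1,1)$ as well as the segment $\{h(a) = 0\} \cap \{h(b) = 0\}$ — more carefully, $K_\xi$ is the set where not both entropies are $\le \xi$ and not both are $\ge 1-\xi$, which is compact and on which $g$ is strictly positive by the zero-set characterization above (the only zeros of $g$ have $H(A) = H(B) = 0$ or $H(A) = H(B) = 1$, and such points violate one of the two hypotheses). By continuity of $g$ and compactness of $K_\xi$, the minimum $\Delta(\xi) := \min_{(a,b) \in K_\xi} g(a,b)$ is attained and is strictly positive. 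Setting this $\Delta(\xi)$ gives the claimed bound $H(A+B) > \frac{H(A)+H(B)}{2} + \Delta(\xi)$.

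The main obstacle I anticipate is verifying the zero-set characterization cleanly, in particular ruling out spurious zeros of $g$ in the interior or on the edges of the square. One has to be careful because $h$ is not strictly concave at the endpoints and $a \ast b$ can equal $a$ or $b$ (when the other is $0$), so the naive ``strict concavity implies strict inequality'' step degenerates; the safe route is a short case analysis — if both $a,b \in \{0,1\}$, compute $g$ directly (it is $0$ iff $a = b$, but then both entropies are $0$, still consistent with needing both hypotheses violated); if exactly one, say $b$, is in $\{0,1\}$ and $a \in (0,1)$, then $a \ast b \in \{a, 1-a\}$ so $h(a \ast b) = h(a)$ and $g = \tfrac12 h(a) > 0$; if both are in $(0,1)$, then $a \ast b$ is strictly between $\min(a,b)$ and $1/2$ (strictly, unless $a = b$), and strict concavity of $h$ on $(0,1)$ gives $g > 0$. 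Once this case analysis is in place, the compactness conclusion is immediate and routine.
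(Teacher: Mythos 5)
Your overall strategy — parametrize by $(a,b)\in[0,1]^2$, define the gain $g(a,b)=h(a*b)-\tfrac12(h(a)+h(b))$, identify its zero set, observe the hypotheses keep $(a,b)$ in a closed set disjoint from that zero set, and conclude by compactness and continuity — is sound and genuinely different from the paper's proof. The paper instead writes the gain as a mutual information $I(A+B;D)$ (for a coin-flip mixture variable $D$), applies Pinsker's inequality, and obtains an \emph{explicit} lower bound $\Delta(\xi)=\tfrac{2}{\ln 2}\,\sigma(\xi)^4\bigl(1-\sigma(\xi)\bigr)^2$. Your compactness argument is shorter conceptually but non-constructive; the paper's is longer but gives a computable rate, which can matter when one cares about effective bounds.

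However, your zero-set characterization is wrong as stated, and the case analysis meant to justify it has a real gap. You claim $g(a,b)=0$ iff $\{a,b\}\subseteq\{0,1\}$; this misses $(a,b)=(\tfrac12,\tfrac12)$, where $a*b=\tfrac12$ and $g=1-\tfrac12(1+1)=0$. Correspondingly, your final case --- ``if both $a,b\in(0,1)$ then $a*b$ lies strictly between $\min(a,b)$ and $1/2$ (unless $a=b$) and strict concavity gives $g>0$'' --- is false at $a=b=\tfrac12$, and the ``unless $a=b$'' caveat points in the wrong direction: the genuinely degenerate case is $a=\tfrac12$ \emph{or} $b=\tfrac12$ (only when both hold does $g$ vanish), whereas $a=b\in(0,\tfrac12)$ still gives $a*b=2a(1-a)\in(a,\tfrac12)$ and hence $g>0$. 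There is also a smaller slip in the corner case: for $a,b\in\{0,1\}$, $g=0$ for \emph{all four} corners, not just when $a=b$. Curiously, two sentences later you do write the correct entropy-level description — ``the only zeros of $g$ have $H(A)=H(B)=0$ or $H(A)=H(B)=1$'' — which contradicts your earlier characterization; the $H(A)=H(B)=1$ zero is exactly $(\tfrac12,\tfrac12)$. Since that point has $\min\{H(A),H(B)\}=1\not<1-\xi$, it is indeed excluded by the hypotheses, so the lemma's conclusion survives, but the argument as written is internally inconsistent and needs the zero-set analysis redone: the correct statement is that $g\ge 0$ everywhere with equality exactly on $\{0,1\}^2\cup\{(\tfrac12,\tfrac12)\}$, and both components of this set violate one of the two hypotheses.
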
 


We are now ready to state and prove the cardinal lemma of this section. Informally, we now show that if $H_n = H(\tilde{U}_i|Q_i)$ has not polarized, then it has not converged.
\begin{lemm}
    \label{lemm:cardinalTilde}
    For all $\xi > 0$ there exist $\theta(\xi) > 0$ and $N_0$ such that for all $N > N_0$ and all $1 \leq i \leq N$,
\begin{equation}
\label{eqn:h-increase-3}
H(\tilde{U}_i|Q_i)\in(3\xi,1-3\xi)
\text{ implies }
H(\tilde{U}_i+\tilde{V}_i|Q_i,R_i)-H(\tilde{U}_i|Q_i)> 2\theta(\xi) \; .
\end{equation}
\end{lemm}

\begin{IEEEproof}
For a given $\xi > 0$, let $\theta(\xi) = \delta(\xi) \Delta(\xi)/2$, where $\delta(\xi)$ and $\Delta(\xi)$ are as in Lemmas~\ref{lemm:nostuck} and \ref{lem:trivial}. Also, let $N_0$ be as in Lemma~\ref{lemm:nostuck}. The motivation for these choices will soon become apparent. Set $N > N_0$ and let $i$ be given. We must show that $(\ref{eqn:h-increase-3})$ holds. 

Let us first introduce some notation. Let $X$ and $Y$ be generic random variables in this paragraph. Note that $H(X|Y=y)$ is a function of $y$, which we denote in this paragraph as $g(y)$. We shall denote $g(Y)$ as $H(X|\underline{Y})$. We emphasize: the underline in $H(X|\underline{Y})$ signifies that we are dealing with a random variable, which is a function of the underlined quantity.\footnote{One might benefit from verbalizing $H(X|\underline{Y})$ as ``the conditional entropy of $X$, as a function of $Y$''. Note that this definition is similar to the definition of $E[X|Y]$, which is usually taken to be a random variable that is a function of $Y$.} A simple and concise result of this definition is that
    \[
        H(X|Y) = E[H(X |\underline{Y})] \; .
    \]

    Assume that
    \begin{equation}
        \label{eq:nonTrivialEntropy}
        H(\tilde{U}_i|Q_i)\in(3\xi,1-3\xi) \; ,
    \end{equation}                                                               
    otherwise the claim is vacuous.
    Together with our assumption that $\xi$ is positive, the above trivially implies that
    \begin{equation}
        \label{eq:xiBounds}
        0 < \xi < \frac{1}{6} \; .
   \end{equation}
   Recall that $H(\tilde{U}_i|Q_i)= E(\HU)$. In order to keep the notation light, we further denote
    \begin{IEEEeqnarray}{rCl}
        \alpha & =&  \Pr(\HU \leq \xi) \; ,\\
        \beta & = &  \Pr(\HU \in (\xi,1-\xi))\label{eq:betaDef} \; , \\
        \gamma & = &  \Pr(\HU \geq 1-\xi)) \label{eq:gammaDef} \; .                                                                                                                     
    \end{IEEEeqnarray}
    We will prove (\ref{eqn:h-increase-3}) for two cases, $\beta < \xi$ and $\beta \geq \xi$.
    
    \emph{Case 1}: Consider first the case in which
    \begin{equation}
        \label{eq:HUTildeGivenQModerateImprobable}
        \beta  < \xi \; .
    \end{equation}                                                                                                
    In words: the probability that $Q_i$ equals a value $q_i$ for which $H(\tilde{U}_i|Q_i=q_i) \in (\xi,1-\xi)$ is denoted $\beta$, and is less than $\xi$. Informally, for $\xi > 0$ small, this means that a typical realization of $Q_i$ implies either an ``almost certainty'' regarding the value of $\tilde{U}_i$ or an ``almost erasure''.
    
    Informally, we next show that for $\xi$ ``small'', and  under the assumptions (\ref{eq:nonTrivialEntropy}) and (\ref{eq:HUTildeGivenQModerateImprobable}), the probability of an ``almost erasure'', $\gamma$, is not trivial. That is, for a lower bound on $\gamma$, we employ (\ref{eq:nonTrivialEntropy})--(\ref{eq:HUTildeGivenQModerateImprobable}) and deduce that
\begin{IEEEeqnarray*}{rCl}
3 \xi <  H(\tilde{U}_i|Q_i) & \leq & \alpha \cdot  \xi + \beta \cdot (1-\xi) + \gamma \cdot 1 \\
                             & < & \alpha \cdot  \xi + \xi \cdot (1-\xi) + \gamma \cdot 1 \\
                             & \leq &  (1-\gamma) \cdot  \xi + \xi \cdot (1-\xi) + \gamma \cdot 1 \; ,
\end{IEEEeqnarray*}
    where the last inequality follows from $\alpha \leq 1 - \gamma$ (since $\alpha$, $\beta$ and $\gamma$ are probabilities summing to $1$). Rearranging the above gives
    \begin{equation}
        \label{eq:gammaLowerBound}
        \gamma > \frac{\xi + \xi^2}{1-\xi} \; .
    \end{equation}
    For an upper bound on $\gamma$, we again use (\ref{eq:nonTrivialEntropy})--(\ref{eq:HUTildeGivenQModerateImprobable}) to show that
    \[
        1 - 3\xi > H(\tilde{U}_i|Q_i) \geq \alpha \cdot 0 + \beta \cdot \xi + \gamma \cdot (1-\xi) \geq \gamma \cdot (1-\xi)\; .
    \]
    Rearranging gives
    \begin{equation}
        \label{eq:gammaUppedBound}
        \gamma < \frac{1-3\xi}{1-\xi} \; .
    \end{equation}
    By (\ref{eq:xiBounds}), (\ref{eq:gammaLowerBound}), (\ref{eq:gammaUppedBound}), and some simple algebra, we deduce that
    \begin{equation}
        \label{eq:gammaGood}
        \gamma \in (\xi,1-\xi) \; .
    \end{equation}
    
    Recall that by (\ref{eqn:shorthand}), $Q_i$ is a deterministic function of $X_1^N$ and $Y_1^N$. Thus, there clearly exists a $\{0,1\}$-valued function $f$ such that $f(X_1^N,Y_1^N)$ equals $1$ iff  $\HU \geq 1-\xi$. That is, for $\xi$ ``small'', $f(X_1^N,Y_1^N)$ equals $1$ iff $Q_i$ corresponds to an ``almost erasure'' of $\tilde{U}_i$. By the symmetry of definitions in (\ref{eqn:shorthand}) and (\ref{eq:uvtilde}), the above $f$ also satisfies that $f(X_{N+1}^{2N},Y_{N+1}^{2N}) = 1$  iff  $\HV \geq 1-\xi$. Recalling (\ref{eq:gammaDef}), (\ref{eq:gammaGood}), and our definition of $f$, we get from Lemma~\ref{lemm:nostuck} that 
    \begin{equation}
        \label{eq:nonTrivialEntropyPairProbable}
        \Pr\Big(\HU \geq 1-\xi\; , \; \HV< 1-\xi\Big)> \delta(\xi) \; .
    \end{equation}
    Let us now define the ``good'' (with respect to Lemma~\ref{lem:trivial}) set $G$ of pairs $(q_i,r_i)$ as
    \[
        G = \left\{ (q_i,r_i) : 
            \max\{H(\tilde{U}_i|Q_i=q_i),H(\tilde{V}_i|R_i=r_i)\}>\xi\quad \text{ and } \quad \min\{H(\tilde{U}_i|Q_i=q_i),H(\tilde{V}_i|R_i=r_i)\}<1-\xi \right\} \; .
    \]
    By (\ref{eq:xiBounds}) and (\ref{eq:nonTrivialEntropyPairProbable}), 
\begin{equation}
\label{eqn:minmax}
\Pr((Q_i,R_i) \in G)>\delta(\xi) \; .
\end{equation}
We are now ready to show (\ref{eqn:h-increase-3}). We claim that
\begin{multline}
    \label{eq:entroyDiffAsSum}
    H(\tilde{U}_i+\tilde{V}_i|Q_i,R_i)-H(\tilde{U}_i|Q_i) =  H(\tilde{U}_i+\tilde{V}_i|Q_i,R_i)-\frac{H(\tilde{U}_i|Q_i)+H(\tilde{V}_i|Q_i)}{2} \\
    = \sum_{(q_i,r_i)} p_{Q_i,R_i}(q_i,r_i) \left[ H(\tilde{U}_i+\tilde{V}_i|Q_i=q_i,R_i=r_i)-\frac{H(\tilde{U}_i|Q_i=q_i)+H(\tilde{V}_i|R_i=r_i)}{2} \right] \; , \\
    \geq \sum_{(q_i,r_i) \in G} p_{Q_i,R_i}(q_i,r_i) \left[ H(\tilde{U}_i+\tilde{V}_i|Q_i=q_i,R_i=r_i)-\frac{H(\tilde{U}_i|Q_i=q_i)+H(\tilde{V}_i|R_i=r_i)}{2} \right] \; , \\
    > \delta(\xi) \cdot \Delta(\xi) \; .
\end{multline}
Indeed, the first equality is by stationarity; the first inequality is because the term in brackets is always non-negative\footnote{Note that $H(\tilde{U}_i+\tilde{V}_i|Q_i=q_i,R_i=r_i) \geq H(\tilde{U}_i+\tilde{V}_i|\tilde{V}_i,Q_i=q_i,R_i=r_i) = H(\tilde{U}_i|Q_i=q_i)$, and we can similarly lower bound by $H(\tilde{V}_i|R_i=r_i)$.}; the last inequality is by Lemma~\ref{lem:trivial} and (\ref{eqn:minmax}). Thus, recalling that we have taken $\theta(\xi) = \delta(\xi) \Delta(\xi)/2$, we have proved (\ref{eqn:h-increase-3}), under the assumptions (\ref{eq:nonTrivialEntropy}) and (\ref{eq:HUTildeGivenQModerateImprobable}).

\emph{Case 2}: We now aim to prove (\ref{eqn:h-increase-3}), under the assumptions (\ref{eq:nonTrivialEntropy}) and
\begin{equation}
    \label{eq:HUTildeGivenQModerateProbable}
    \beta  \geq  \xi \; .
\end{equation}                                                                                                
This will be shorter, informally because we are now assuming that the probability of $Q_i$ equalling a value for which the entropy of $\tilde{U}_i$ is ``moderate'' is ``sufficiently high''. We start by noticing that under the event $\HU \in (\xi,1-\xi)$ used to define $\beta$ in (\ref{eq:betaDef}), we have that $(Q_i,R_i) \in G$. Thus, the LHS of (\ref{eqn:minmax}) is lower bounded by $\beta$. Next, we claim that $\beta > \delta(\xi)$, and hence (\ref{eqn:minmax}) holds. Indeed, recall from the proof of Lemma~\ref{lemm:nostuck} that $\delta(\xi) = \xi\left(1-\sqrt{1-\xi}\right)/2 < \xi$.
By this and (\ref{eq:HUTildeGivenQModerateProbable}) we deduce that (\ref{eqn:minmax}) holds, and the proof continues as before. Hence, we have proved (\ref{eqn:h-increase-3}), under the assumptions (\ref{eq:nonTrivialEntropy}) and (\ref{eq:HUTildeGivenQModerateProbable}).
\end{IEEEproof}

The following corollary to Lemma~\ref{lemm:cardinalTilde} shifts us back to $U_i$ and $V_i$ from $\tilde{U}_i$ and $\tilde{V}_i$.

\begin{coro}
    \label{coro:cardinal}
For all $\xi > 0$ there exists $\theta(\xi) > 0$ such that
\begin{align}
\label{eqn:h-increase}
H(U_i|Q_i)&\in(3\xi,1-3\xi)
\text{ implies }
H(U_i+V_i|Q_i,R_i)-H(U_i|Q_i)>\theta(\xi)
\end{align}
for a fraction of indices
 $i\in\{1,\dotsc,N\}$ approaching~$1$ as $N \to \infty$.
\end{coro}
\begin{IEEEproof}
    Let $\xi > 0$ be given and take $\theta(\xi)$ as in Lemma~\ref{lemm:cardinalTilde}. Also, take $N_0$ as in Lemma~\ref{lemm:cardinalTilde}. Fix $N > N_0$, and let $\cA$ be the set of indices for which (\ref{eq:entropiesEpsilonClose}) holds, for $\epsilon = \theta(\xi)$. Note that by Corollary~\ref{coro:EntropyOfXorDifference}, the fraction of indices in $\cA$ approaches $1$ as $N \to \infty$. By assumption, for all indices $i$, and specifically for all $i \in \cA$, we have that (\ref{eqn:h-increase-3}) holds. Our aim is to show that (\ref{eqn:h-increase}) holds for all $i \in \cA$ as well.  Indeed, let $i \in \cA$. If $H(U_i|Q_i) \not\in(3\xi,1-3\xi)$, then (\ref{eqn:h-increase}) holds trivially. Thus, assume that $H(U_i|Q_i) \in(3\xi,1-3\xi)$.  By (\ref{eq:UiTildeAlone}), this is equivalent to $H(\tilde{U}_i|Q_i)\in(3\xi,1-3\xi)$. Thus, by assumption, the consequent in (\ref{eqn:h-increase-3}) holds. We deduce that
\begin{IEEEeqnarray*}{rCl}
    \IEEEeqnarraymulticol{3}{l}{H(U_i+V_i|Q_i,R_i)-H(U_i|Q_i)} \\
    \quad & = &  H(U_i+V_i|Q_i,R_i)-H(\tilde{U}_i|Q_i) \\
    & = & H(U_i+V_i|Q_i,R_i) - H(\tilde{U}_i+\tilde{V}_i|Q_i,R_i)+  H(\tilde{U}_i+\tilde{V}_i|Q_i,R_i) -H(\tilde{U}_i|Q_i) \\
    & > & {-}\theta(\xi) + H(\tilde{U}_i+\tilde{V}_i|Q_i,R_i) -H(\tilde{U}_i|Q_i) \\
    & > & {-}\theta(\xi) + 2 \theta(\xi) \\
    & = & \theta(\xi) \; ,
\end{IEEEeqnarray*}
    where the first equality follows from (\ref{eq:UiTildeAlone}); the first inequality follows from (\ref{eq:entropiesEpsilonClose}), recalling that $\epsilon = \theta(\xi)$; and the last inequality follows from our assumption that the consequent in (\ref{eqn:h-increase-3}) holds. Thus, the consequent in (\ref{eqn:h-increase}) holds.
\end{IEEEproof}
    
With Corollary~\ref{coro:cardinal} at hand, the proof of Theorem~\ref{thm:weak} is forthcoming. Indeed, we now essentially repeat the arguments in \cite{Arikan2009}.

\begin{IEEEproof}[Proof of Theorem~\ref{thm:weak}]
    Recall that in Lemma~\ref{lemm:supermartingale}, we proved that $H_n$ converges a.s.\ and in $L^1$ to $H_\infty \in [0,1]$. We next show that $H_\infty$ converges a.s.\ to either $0$ or $1$. That is, we show that for all $\epsilon > 0$, $\Pr(H_\infty \in (\epsilon, 1-\epsilon)) = 0$. Indeed, assume to the contrary that there exists $\epsilon > 0$ for which
    \begin{equation}
        \label{eq:HInftyNotPolarized}
        \Pr(H_\infty \in (\epsilon, 1-\epsilon)) > \rho \; ,
    \end{equation}
    where $\rho > 0$. Next, note that
\begin{IEEEeqnarray*}{rCl}
    \IEEEeqnarraymulticol{3}{l}{\Pr(H_n \in (\epsilon/2,1-\epsilon/2))} \\
    \quad & \geq & \Pr(H_n \in (\epsilon/2,1-\epsilon/2) \quad \mbox{and} \quad  |H_n - H_\infty| < \epsilon/2) \\
    & \geq & \Pr(H_\infty  \in (\epsilon,1-\epsilon) \quad \mbox{and} \quad  |H_n - H_\infty| < \epsilon/2) \\
    & = & \Pr(H_\infty \in (\epsilon,1-\epsilon)) - \Pr(H_\infty \in (\epsilon,1-\epsilon) \quad \mbox{and} \quad |H_n - H_\infty| \geq \epsilon/2) \\
    & \geq & \Pr(H_\infty \in (\epsilon,1-\epsilon)) - \Pr(|H_n - H_\infty| \geq \epsilon/2) \\
    & > & \rho - \Pr(|H_n - H_\infty| \geq \epsilon/2) \; ,
\end{IEEEeqnarray*}
    where the last inequality follows from (\ref{eq:HInftyNotPolarized}). Since a.s.\ convergence implies convergence in probability \cite[Theorem 4.1.2.]{Chung:01b}, we deduce from the above that
    \[
        \liminf_{n \to \infty} \Pr(H_n \in (\epsilon/2, 1-\epsilon/2)) \geq  \rho \; .
    \]
    Now, take $\xi$ such that $3\xi = \epsilon/2$. We deduce from Corollary~\ref{coro:cardinal} that for $n$ large enough,
    \[
        \Pr\left(|H_{n+1} - H_n| > \theta(\xi) \right) > \frac{\rho}{4} \; .
    \]
    However, this implies that $H_n$ cannot converge in probability to $H_\infty$, a contradiction to what was stated earlier. We have proven that $H_\infty \in \{0,1\}$ a.s. 

    We now show that
    \begin{equation}
        \lim_{n \to \infty} E[H_n] = E[H_\infty] \; .
    \end{equation}
    Indeed,
    \[
        E[-|H_n - H_\infty|] \leq E[H_n - H_\infty] \leq E[|H_n - H_\infty|] \; ,
    \]
    and by the $L^1$ convergence of $H_n$ to $H_\infty$ and the sandwich property, the limit of the middle term is $0$. By definition, $\lim_{n \to \infty} E[H_n] = \mathcal{H}_{X|Y}$. Hence, since $H_\infty \in\{0,1\}$ a.s., we must have that $\Pr(H_\infty = 1 ) = 1 - \Pr(H_\infty = 0) = \mathcal{H}_{X|Y}$. Recalling that $H_n$ converges in probability to $H_\infty$, the claim in Theorem~\ref{thm:weakAlt} follows. We end by noting that Theorem~\ref{thm:weakAlt} is equivalent to Theorem~\ref{thm:weak}.

\end{IEEEproof}

\section{Proof of Theorem~\ref{thm:strong}}

Like most proofs of the speed of polarization, our proof of
Theorem~\ref{thm:strong} relies on the following result by Ar\i kan
and Telatar~\cite{ArikanTelatar2009}, although we need the more
general form of the result given\footnote{See also \cite{Tal:17.2p} for a simpler proof.}  in~\cite[Lemma 2.3]{Sasoglu2011}.
\begin{lemm}[\cite{ArikanTelatar2009},\cite{Sasoglu2011}]
    If $Z_n$ converges almost surely to a $\{0,1\}$-valued random variable $Z_\infty$ 
and if there exists $K<\infty$ such that
\begin{align}
\label{eqn:z-bound-1}
Z_n\le KZ_{n-1} \; , & \quad\text{if }B_n=0\\
\label{eqn:z-bound-2}
Z_n\le KZ_{n-1}^2 \; ,& \quad\text{if }B_n=1
\end{align}
then
$$
\lim_{n\to\infty} \Pr(Z_n<2^{-2^{n\beta}})=\Pr(Z_\infty=0)
$$
for all $\beta<1/2$. 
\end{lemm}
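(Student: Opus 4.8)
The plan is to establish the two inequalities $\limsup_n P(Z_n<2^{-2^{n\beta}})\le P(Z_\infty=0)$ and $\liminf_n P(Z_n<2^{-2^{n\beta}})\ge P(Z_\infty=0)$; together they give the claim. Here the $Z_n$'s are Bhattacharyya parameters, so $Z_n\in[0,1]$, and we may assume $K\ge1$. It is convenient to write $W_n=-\log_2 Z_n\in[0,\infty]$ and $\kappa=\log_2 K\ge0$; then \eqref{eqn:z-bound-1}--\eqref{eqn:z-bound-2} read $W_n\ge W_{n-1}-\kappa$ when $B_n=0$ and $W_n\ge 2W_{n-1}-\kappa$ when $B_n=1$, while $\{Z_n<2^{-2^{n\beta}}\}=\{W_n>2^{n\beta}\}$.

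The first inequality is immediate. If $W_n>2^{n\beta}$ for infinitely many $n$, then $\liminf_n Z_n=0$, and since $Z_n\to Z_\infty$ almost surely this forces $Z_\infty=0$; hence $\limsup_n\{Z_n<2^{-2^{n\beta}}\}\subseteq\{Z_\infty=0\}$, and the reverse Fatou lemma for events gives $\limsup_n P(Z_n<2^{-2^{n\beta}})\le P(\limsup_n\{Z_n<2^{-2^{n\beta}}\})\le P(Z_\infty=0)$.

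For the second inequality, Fatou's lemma gives $\liminf_n P(Z_n<2^{-2^{n\beta}})\ge P(W_n>2^{n\beta}\text{ for all large }n)$, and the event on the right lies in $\{Z_\infty=0\}$, so it suffices to show that for every $\delta>0$ this event has probability at least $P(Z_\infty=0)-\delta$. Fix $\delta>0$ and pick $\beta_0\in(\beta,1/2)$. Three ingredients combine. (a) Since $(B_j)$ is i.i.d.\ $\Berhalf$, the walk $n\mapsto\sum_{j=m+1}^n(B_j-\beta_0)$ has positive drift and hence almost surely a finite infimum; so $L$ can be chosen so large that $\mathcal{L}_m=\{\,\sum_{j=m+1}^n B_j\ge\beta_0(n-m)-L\ \text{ for all }n>m\,\}$ has probability at least $1-\delta/2$, uniformly in $m$ by stationarity. (b) A deterministic fact (proved in the last paragraph): there is $c_0=c_0(\kappa,\beta_0,L)$ such that whenever $W_m\ge c_0$ and the bit sequence lies in $\mathcal{L}_m$, the recursion forces $W_n$ to grow at least exponentially at rate $\beta_0$, so $W_n>2^{n\beta}$ for all $n$ beyond some (now fixed) $n_1$, using $\beta_0>\beta$. (c) Applying Fatou to $\mathbbm{1}_{[Z_m\le 2^{-c_0}]}$ — which equals $1$ for all large $m$ on $\{Z_\infty=0\}$ — we may fix $m$ with $P(Z_m\le 2^{-c_0})\ge P(Z_\infty=0)-\delta/2$. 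On $\{Z_m\le 2^{-c_0}\}\cap\mathcal{L}_m$, whose probability is at least $P(Z_\infty=0)-\delta$, item (b) gives $W_n>2^{n\beta}$ for all large $n$. Letting $\delta\to0$ finishes the proof, modulo the deterministic fact.

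It remains to prove the deterministic fact, which is the heart of the argument and the step I expect to be the main obstacle. Let $T_1<T_2<\cdots$ be the positions of the ``$1$'' bits after time $m$, and let $G_i$ be the number of ``$0$'' bits strictly between the $(i-1)$st and $i$th such positions, with $T_0=m$. Iterating $W_{T_j}\ge 2W_{T_{j-1}}-(2G_j+1)\kappa$ (a $G_j$-fold use of the $B=0$ rule followed by one use of the $B=1$ rule) gives
\[
W_{T_j}\ \ge\ 2^{j} W_{m}\ -\ \kappa\, 2^{j}\sum_{i=1}^{j} 2^{-i}(2G_{i}+1)\ .
\]
The constraint defining $\mathcal{L}_m$ says exactly that $\sum_{i=1}^k G_i\le k(\beta_0^{-1}-1)+L\beta_0^{-1}$ for all $k$, hence $G_i\le i(\beta_0^{-1}-1)+L\beta_0^{-1}$ and the geometric sum $\sum_{i\ge1}2^{-i}(2G_i+1)$ is bounded by a constant $\Sigma_0(\beta_0,L)$; taking $c_0=\kappa\Sigma_0+1$ yields $W_{T_j}\ge 2^{j}$ for all $j$. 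Finally $T_j-m=j+\sum_{i\le j}G_i\le(j+L)\beta_0^{-1}$, so $j\ge\beta_0(T_j-m)-L$, and between consecutive ``$1$''-times $W$ decreases by at most $G_{j+1}\kappa$, which is linear in $j$ and hence negligible against $2^{j}$; this gives $W_n\ge 2^{\beta_0(n-m)-L-1}$ for all $n$ past an explicit threshold. The crux is precisely this: a single $B_n=0$ multiplies $Z_n$ by $K>1$ and arbitrarily long zero-runs do occur, so some runs must be allowed to set $Z$ back substantially; what rescues the doubly-exponential rate is that, on $\mathcal{L}_m$, any long zero-run is preceded by enough doublings to more than pay for it, and evaluating the recursion along the one-times $T_j$ with the weights $2^{-i}$ is the device that turns this into the clean telescoped bound. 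The remaining bookkeeping — the constant $\Sigma_0$, the threshold, the linear-versus-exponential comparison — is routine, and choosing $\beta_0$ strictly between $\beta$ and $1/2$ supplies exactly the slack needed to absorb the lower-order losses.
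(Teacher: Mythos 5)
Your proof is correct. One point of context: the paper itself gives no proof of this lemma --- it is imported wholesale from \cite{ArikanTelatar2009} and \cite[Lemma~2.3]{Sasoglu2011} --- so the relevant benchmark is the standard Ar\i kan--Telatar bootstrapping argument rather than anything in this manuscript. Your write-up is a valid self-contained proof in the same spirit but with different bookkeeping: instead of running several successive bootstrapping stages, each using a Chernoff/LLN bound on $\sum_j B_j$ over a finite window to upgrade the estimate on $Z_n$ from ``small constant'' to ``exponentially small'' to ``doubly exponentially small'', you condition once on the pathwise barrier event $\mathcal{L}_m$ (which has probability arbitrarily close to $1$, uniformly in $m$, because the walk $\sum_{j>m}(B_j-\beta_0)$ has positive drift) and then extract the doubly exponential growth of $W_n=-\log_2 Z_n$ in a single pass by telescoping the recursion along the positions $T_j$ of the $1$-bits; the observation that the weighted sum $\sum_i 2^{-i}(2G_i+1)$ is bounded on $\mathcal{L}_m$ is exactly the right device, and the remaining pieces (reverse Fatou for the upper bound, Fatou plus the choice of $m$ with $P(Z_m\le 2^{-c_0})\ge P(Z_\infty=0)-\delta/2$, the union bound for the intersection, and the linear-versus-exponential comparison between consecutive $1$-times) are all sound. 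Two cosmetic remarks: the threshold $n_1$ of your step (b) depends on $m$, which is only fixed in step (c), so the two steps should be quantified in the opposite order; and the upper-bound direction implicitly uses $\beta>0$ (for $\beta\le 0$ the threshold $2^{-2^{n\beta}}$ does not tend to $0$ and the stated equality can fail), an imprecision inherited from the lemma statement rather than a flaw in your argument.
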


Recall from the proof of Theorem~\ref{thm:weak} that $H_n$ converges almost
surely  to a $\{0,1\}$-valued random variable.  It then
follows from the relations~\cite[Proposition 2]{Arikan:10c} 
\begin{align*}
    Z(A|B)^2&\le H(A|B) \\
H(A|B)&\le \log(1+Z(A|B))
\end{align*}
that $Z_n$ also converges almost surely to a $\{0,1\}$-valued random variable $Z_\infty$. Indeed, $H_n\to0$ implies $Z_n\to0$ whereas $H_n\to1$ implies $Z_n\to1$. It then
suffices to show that $Z_n$ satisfies inequalities
\eqref{eqn:z-bound-1}~and~\eqref{eqn:z-bound-2}.  

We claim that this is indeed the case with $K=2\psi_0$.  To see this,
let $\hat{X}_1^{2N},\hat{Y}_1^{2N}$ be distributed as
$P_{X_1^NY_1^N}\cdot P_{X_{N+1}^{2N}Y_{N+1}^{2N}}$, and define the
corresponding variables
$\hat{U}_i,\hat{V}_i,\hat{Q}_i,\hat{R}_i$ as
in~\eqref{eqn:shorthand}.
We know from~\cite[Proposition 5]{Arikan2009}
that
\begin{align}
\label{eqn:z-memoryless-1}
Z(\hat{U}_i+\hat{V}_i
	|\hat{Q}_i,
	\hat{R}_i)
&\le 2Z(\hat{U}_i
	|\hat{Q}_i) \; ,\\
\label{eqn:z-memoryless-2}
Z(\hat{V}_i
	|\hat{Q}_i,
	\hat{R}_i,
	\hat{U}_i+\hat{V}_i)
&\le Z(\hat{U}_i
	|\hat{Q}_i)^2 \; .
\end{align}

Now let $(A,B)$
and $(\hat{A},\hat{B})$ be random variables that can be
written as 
\begin{align*}
(A,B)
	&=f(X_1^{2N},Y_1^{2N})\\
(\hat{A},\hat{B})
	&=f(\hat{X}_1^{2N},\hat{Y}_1^{2N})
\end{align*}
for some function $f$.  Observe that the assumption~\eqref{eqn:psi-mixing}
implies $p_{A,B}\le \psi_0 \cdot p_{\hat{A},\hat{B}}$.  Therefore,
for binary $A$ we have
\begin{align}
\notag
Z(A|B)&=2\sum_b \sqrt{p_{A,B}(0,b)p_{A,B}(1,b)}\\
\notag
&\le 2\psi_0\sum_b 
	\sqrt{p_{\hat{A},\hat{B}}(0,b)
	p_{\hat{A},\hat{B}}(1,b)}\\
\label{eqn:z-tilde}
&=\psi_0\cdot Z(\hat{A}|\hat{B}) \; .
\end{align}

Defining $A=U_i+V_i$ and $B=(Q_i,R_i)$ and
combining \eqref{eqn:z-tilde}~with~\eqref{eqn:z-memoryless-1}
implies~\eqref{eqn:z-bound-1} with $K=2\psi_0$.  Similarly, defining
$A=V_i$ and $B=(Q_i,R_i,U_i+V_i)$ and combining
\eqref{eqn:z-tilde}~with~\eqref{eqn:z-memoryless-2}
implies~\eqref{eqn:z-bound-2} with $K=\psi_0$.  This proves
Theorem~\ref{thm:strong} since $\psi_0<\infty$ by assumption.

\section{Proof of Theorem~\ref{thm:periodic}}
Recall that the process we are considering is described in
Figure~\ref{fig:period4}. Let us start by defining the process
exactly. The state of the process at time $t=1,2,\ldots$ is denoted
$S_t$. Each such state has $4$ possible values, $\{0,1,2,3\}$. The
initial state $S_1$ is picked uniformly at random. The value of $S_1$
determines the value of all $S_t$, specifically, $S_t = S_1 + t-1
\pmod 4$. If $S_t \in \{0,1\}$, then $X_t$, the output of the process
at time $t$, is picked uniformly at random from $\{0,1\}$. If $S_t \in
\{2,3\}$, then $X_t$ equals $0$. Recall that for a given $N$, we
have $U_1^N=X_1^N \BN \GN$.

The proof of Theorem~\ref{thm:periodic} is divided into two parts. In
the first part, we consider $H(U_i|U_1^{i-1},S_1 = s_1)$. Namely, we
consider a setting related to, yet distinct from, that of
Theorem~\ref{thm:periodic}: we assume that the initial state $S_1$ is
known to equal the fixed value $s_1$. As we will see, the case $N=8$ is of particular importance. We refer the reader to Table~\ref{tbl:differentiateStates}, which highlights key features of the distribution of $U_1^6$ when $N=8$, for the $4$ possible values of $s_1$. The entry ``$U_6 \perp U_1^5$'' denotes that $U_6$ is independent of $U_1^5$. The correctness of the Table~\ref{tbl:differentiateStates} is easy to validate by using Table~\ref{tbl:BB00}.

\begin{lemm}
\label{lemm:noPolarizationWhenPeriodicGivenState}
Consider the stationary Markov process described in Figure~\ref{fig:period4}. Then, for $N \geq 8$, the following holds.
\[
\label{eq:fullPolarizationPreserved}
\mbox{For all} \;\; \frac{5N}{8} < i \leq \frac{6N}{8} \;\; \mbox{we have that} \;\;
H(U_i|U_1^{i-1},S_1 = s_1) =
\begin{cases}
0 \; , & \mbox{if $s_1 \in \{1,3\}$} \; , \\
1 \; , & \mbox{if $s_1 \in \{0,2\}$} \; .
\end{cases} 
\]
\end{lemm}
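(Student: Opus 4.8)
The plan is to analyze the structure of $U_1^N = X_1^N \BN \GN$ explicitly, exploiting the rigid deterministic/random pattern of the source. Since $S_1$ is known, the period-$4$ pattern of the states is fully determined: the positions $t$ with $S_t \in \{0,1\}$ carry an independent fair coin $X_t \sim \Berhalf$, and the positions $t$ with $S_t \in \{2,3\}$ have $X_t = 0$ deterministically. So $X_1^N$ is a vector that is the sum of a fixed (zero) vector and a random vector supported on a known, periodic set of coordinates of density $1/2$. The first step is to reduce the claim for general $N = 2^n \geq 8$ to the base case $N = 8$: because $\BN \GN$ has a recursive Kronecker/bit-reversal structure, the length-$N$ transform of a period-$4$ input decomposes into independent length-$8$ transforms acting on the $N/8$ disjoint blocks of $8$ consecutive source symbols, each of which sees the \emph{same} pattern of coins (up to which residue class $s_1$ falls into). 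I would make this precise by showing that, under bit-reversal permutation, the indices $i$ with $5N/8 < i \leq 6N/8$ correspond exactly to the ``slot $6$'' coordinate ($\mathbf{b}$ being the binary expansion of $5$, i.e. $\mathbf{b} = 101$ in the top three bits) of each of the $N/8$ length-$8$ subblocks, and that conditioning on $U_1^{i-1}$ supplies precisely the earlier outputs of those same subblocks plus complete (hence irrelevant) information about all other subblocks.

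The second step is the finite computation at $N = 8$. Here one writes $U_1^8 = X_1^8 \mathsf{B}_8 \mathsf{G}_8$ and tracks, for each of the four values of $s_1$, which linear combinations of the underlying fair coins appear in $U_1^5$ and in $U_6$. The content of Table~\ref{tbl:differentiateStates} (validated via Table~\ref{tbl:BB00}) is exactly this: when $s_1 \in \{0,2\}$ the coordinate $U_6$ is a fair coin independent of $U_1^5$, giving $H(U_6 \mid U_1^5, S_1 = s_1) = 1$; when $s_1 \in \{1,3\}$ the coordinate $U_6$ is a deterministic function of $U_1^5$ (indeed a fixed $\mathbb{F}_2$-linear combination of coins that already all appear, individually or in the right combinations, among $U_1^5$), giving $H(U_6 \mid U_1^5, S_1 = s_1) = 0$. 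I would present this as a short case analysis rather than reproducing the full $8\times 8$ matrix arithmetic, citing the tables for the bookkeeping.

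The third step is to lift the $N=8$ statement back to general $N$. Fix $i$ with $5N/8 < i \leq 6N/8$; let $j \in \{1, \dots, N/8\}$ index the subblock it lives in after the bit-reversal identification. Conditioned on $S_1 = s_1$, the relevant subblock's coin pattern is governed by $s_1 + (\text{offset})$, but all subblocks share the same offset structure modulo $4$ — in fact, because the subblocks are $8$-long and $8 \equiv 0 \bmod 4$, \emph{every} subblock sees the identical state pattern as the first one, so the per-subblock residue is just $s_1$ itself. Thus $H(U_i \mid U_1^{i-1}, S_1 = s_1)$ equals the $N=8$ value $H(U_6 \mid U_1^5, S_1 = s_1)$, which is $0$ for $s_1 \in \{1,3\}$ and $1$ for $s_1 \in \{0,2\}$, as claimed. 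The extra conditioning on outputs from other subblocks only decreases entropy but cannot push it below $0$, and in the $s_1 \in \{0,2\}$ case it cannot decrease it at all since $U_i$ is already independent of everything in its own subblock's past and, by the block-independence, of all other subblocks.

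The main obstacle I anticipate is the first step: correctly pinning down how the bit-reversal permutation $\BN$ interacts with the recursive $\GN$ structure so that the index range $5N/8 < i \leq 6N/8$ really does isolate one fixed ``slot'' within each length-$8$ subblock, and verifying that $U_1^{i-1}$ decomposes cleanly into (a) the first five transform outputs of subblock $j$ and (b) full information about the other subblocks. Getting the indexing exactly right — in particular confirming that it is ``slot $6$'' (binary $101$) and not a neighboring slot, and that the relevant subblocks are genuinely statistically independent given $S_1 = s_1$ — is where the real care is needed; everything downstream is either the tabulated $8 \times 8$ computation or a monotonicity-of-conditional-entropy remark.
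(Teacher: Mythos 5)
Your base case ($N=8$) is right, and so are the key structural facts you invoke — period $4$ divides $8$, so every length-$8$ subblock starts in state $s_1$, and the subblocks are conditionally independent given $S_1=s_1$. But the lift-back step rests on a misreading of what $\BN\GN$ does, and this is a genuine gap. You assert that the indices $i$ with $5N/8<i\le 6N/8$ correspond bijectively to ``the slot-$6$ coordinate of each of the $N/8$ length-$8$ subblocks,'' so that each such $U_i$ ``lives in'' one subblock $j$ and conditioning on $U_1^{i-1}$ supplies that subblock's earlier outputs plus complete (hence harmless) information about the others. That is not what the transform does. The correct statement is that $(U_i)_{5N/8<i\le 6N/8}$ is the length-$(N/8)$ polar transform applied to the vector of slot-$6$ outputs of the $N/8$ subblocks; already for $N=16$ one has $U_{11}=A_6+C_6$ and $U_{12}=C_6$ (with $A,C$ the length-$8$ transforms of the two halves), so each $U_i$ is a nontrivial linear combination across subblocks, not a single subblock's output, and $U_1^{i-1}$ supplies slots $1$--$5$ of \emph{all} subblocks plus a strict prefix of the length-$(N/8)$ transform of the slot-$6$ vector. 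Your argument for the $s_1\in\{0,2\}$ case (``$U_i$ is independent of its own subblock's past and, by block-independence, of all others'') therefore does not go through as written; to make it work you would additionally need that the length-$(N/8)$ transform of an i.i.d.\ $\Ber(1/2)$ vector is again i.i.d.\ $\Ber(1/2)$ and independent of the slot-$1$--$5$ data, and in the $s_1\in\{1,3\}$ case that determinism propagates through arbitrary $\mathbb{F}_2$-linear combinations.

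For comparison, the paper sidesteps the entire bit-layout question by inducting from $N$ to $2N$ in a single step: with $Q_i=U_1^{i-1}$, $R_i=V_1^{i-1}$, it shows $H(U_i|Q_i,S_1=s_1)=H(V_i|R_i,S_1=s_1)$ (periodicity plus conditional independence of the two halves given $S_1$), and then that a common value of $0$ or $1$ is preserved under the map $(U_i,V_i)\mapsto(U_i+V_i,V_i)$. Your plan of unwinding all the way to the $N=8$ base is salvageable with the fixes above, but the one-step induction is considerably shorter and never needs to identify which bits of $i-1$ index which level of the tree.
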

\begin{IEEEproof}
The correctness of the lemma is straightforward to validate for $N=8$. Indeed, for $N=8$ we must only consider $i=6$, and the result follows from the last column of Table~\ref{tbl:differentiateStates}. Namely,  for $s_1 \in \{1,3\}$ we have that $U_6$ is a function of $U_1^5$; for $s_1 \in \{0,2\}$ we have that $U_6$ is independent of $U_1^5$ and is distributed $\Ber(1/2)$.

The general result is proved by induction on $N$. We have proved the basis $N=8$ above. In order to prove the step, let us first tailor the notation (\ref{eqn:shorthand}) to our needs:
\begin{align}
\label{eqn:shorthandPeriodic}
\begin{split}
U_1^N & = X_{1}^{N} \BN \GN\\  
V_1^N & = X_{N+1}^{2N} \BN \GN \\
Q_i   & = U_1^{i-1} \\
R_i   & = V_1^{i-1}
\end{split}
\end{align}
Proving the step is equivalent to proving that for all $\frac{5N}{8} < i \leq \frac{6N}{8}$,
\begin{equation}
\label{eq:UiViTwoTransforms}
H(U_i+V_i|Q_i,R_i,S_1=s_1) = H(V_i|U_i+V_i,Q_i,R_i,S_1=s_1) = H(U_i|Q_i,S_1=s_1) \; .
\end{equation}

Recall that $N$ is a power of $2$ and $N \geq 8$. Thus, $N$ is a multiple of $4$. Since the period of the process is $4$, we have that $S_1 = s_1$ iff $S_{N+1} = s_1$. Moreover, it is easily seen that given that $S_1 = s_1$, $(U_i,Q_i)$ and $(V_i,R_i)$ are identically distributed. Hence,
\begin{equation}
\label{eq:HUiViEqual}
H(U_i|Q_i,S_1=s_1) = H(V_i|R_i,S_{N+1}=s_1) = H(V_i|R_i,S_1=s_1) \; .
\end{equation} 
Moreover, it is easily seen that given that $S_1 = s_1$, $(U_i,Q_i)$ and $(V_i,R_i)$ are independet.

We now prove (\ref{eq:UiViTwoTransforms}) for the two cases of interest. Indeed, if $H(U_i|Q_i,S_1=s_1) = 0$ then $U_i$ and $V_i$ are deterministic function of $Q_i$ and $R_i$, respectively, given that $S_1 = s_1$. Hence, the two equalities in (\ref{eq:UiViTwoTransforms}) follow easily. If $H(U_i|Q_i,S_1=s_1) = 1$, then by (\ref{eq:HUiViEqual}) and the independence of $(U_i,Q_i)$ and $(V_i,R_i)$ given $S_1=s_1$ we deduce that
\begin{multline*}
  2 = H(U_i|Q_i,R_i,S_1 = s_1) + H(V_i|U_i,Q_i,R_i,S_1 = s_1) = H(U_i,V_i|Q_i,R_i,S_1 = s_1) \\ 
  = H(U_i+V_i,V_i|Q_i,R_i,S_1 = s_1) = H(U_i+V_i|Q_i,R_i,S_1 = s_1) + H(V_i|U_i+V_i,Q_i,R_i,S_1 = s_1) \; .
  \end{multline*}
Since the two terms on the RHS are at most $1$, they must both equal $1$, proving (\ref{eq:UiViTwoTransforms}) for this case as well.
\end{IEEEproof}

An immediate corollary of Lemma~\ref{eq:fullPolarizationPreserved} is that $H(U_i|U_1^{i-1},S_1) = 1/2$, for $\frac{5N}{8} < i \leq \frac{6N}{8}$. To see this, note that all $4$ states are equally likely as initial states. What remains is to prove that $S_1$ is essentially known from $U_1^{i-1}$.
\begin{lemm}
\label{lemm:initialStateKnown}
Consider the stationary Markov process depicted in Figure~\ref{fig:period4}. Then, there exists an $\epsilon_N$ such that
\begin{equation}
\label{eq:polarizetozero}
\mbox{for all} \;\; \frac{5N}{8} < i \leq \frac{6N}{8} \;\; \mbox{we have that} \;\;
H(S_1 | U_1^{i-1} ) \leq \epsilon_N \; , \quad \mbox{and} \;\; \lim_{N \to \infty} \epsilon_N = 0 \; .
\end{equation}
\end{lemm}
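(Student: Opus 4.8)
The plan is to show that $H(S_1 \mid U_1^{i-1})$ is small by exhibiting, for each $i$ in the stated range, a small set of coordinates of $U_1^{i-1}$ whose joint value reveals $S_1$ with high probability, and then invoke Fano's inequality. The natural candidate coordinates are those $U_j$ that are deterministic functions of $S_1$ in each block of length $8$, i.e.\ the ``frozen'' coordinates corresponding to states $s_1 \in \{2,3\}$ where $X_t = 0$. Concretely, I would first prove an $N=8$ base case: inspecting Table~\ref{tbl:differentiateStates}/Table~\ref{tbl:BB00}, there is some coordinate $U_{j^\star}$ with $j^\star \le 5$ whose distribution differs across the four values of $s_1$ in a way that, combined with one or two other low-index coordinates, lets one distinguish all four states with some fixed probability $1 - \eta$ for $\eta < 1$; equivalently, $H(S_1 \mid U_1^{5})$ is bounded by some constant strictly less than $2 = \log 4$. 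This already shows $S_1$ leaks information into the prefix $U_1^{i-1}$, but not that the leak is complete.

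To amplify a constant-sized leak into $\epsilon_N \to 0$, I would use the recursive structure together with the period-$4$ stationarity exactly as in Lemma~\ref{lemm:noPolarizationWhenPeriodicGivenState}. Since $N$ is a multiple of $4$, the state at the start of the second half-block equals $S_1$, so $U_1^{i-1}$ and $V_1^{i-1}$ each carry, independently given $S_1$, a ``copy'' of whatever information a block of half the length carries about $S_1$. Writing $U_1^{2i-2} = (U_1^{i-1}, V_1^{i-1})B_{2N}G_{2N}$-style and tracking which new coordinates $U_{2i-1}, U_{2i}$ of the doubled block are visible within the prefix, one gets a recursion of the form $H(S_1 \mid \text{prefix of doubled block}) \le g\big(H(S_1 \mid \text{prefix of block})\big)$ where $g$ is a contraction toward $0$ whenever its argument is below $\log 4$. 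The key mechanism is that the polar combining step $(U,V) \mapsto (U+V, V)$ applied to two conditionally-independent noisy observations of the same random variable $S_1$ produces, among the outputs, a coordinate that is a \emph{strictly better} estimator of $S_1$ than either input — this is just the standard statement that one step of polarization pushes $H$ away from its nonextremal value, here applied to the four-valued ``message'' $S_1$ rather than a bit.

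The main obstacle, and the step I would spend the most care on, is making this amplification argument rigorous while respecting the bookkeeping of \emph{which} coordinates of the length-$2N$ transform fall inside the prefix $U_1^{i-1}$ for $\frac{5N}{8} < i \le \frac{6N}{8}$. Under the bit-reversal indexing this range of indices corresponds to a clean sub-range of the half-size problem — roughly, $i$ in $(\tfrac58 \cdot 2N', \tfrac68 \cdot 2N']$ maps to the analogous window at size $N' = N/2$ — but verifying the index arithmetic (that the relevant ``state-revealing'' coordinates from \emph{both} half-blocks are indeed among $U_1^{i-1}$, and that we never need a coordinate beyond position $i-1$) is the delicate part. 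Once that alignment is established, the contraction $g$ can be taken explicitly from the base-case constant and the conditional-independence-given-$S_1$ property, and setting $\epsilon_N$ to be the result of $\log_2(N/8)$ iterations of $g$ starting from the $N=8$ bound gives a sequence with $\epsilon_N \to 0$, completing the proof. Combined with the corollary to Lemma~\ref{lemm:noPolarizationWhenPeriodicGivenState} (that $H(U_i \mid U_1^{i-1}, S_1) = 1/2$ on this window) and the chain rule $H(U_i \mid U_1^{i-1}) \le H(U_i \mid U_1^{i-1}, S_1) + H(S_1 \mid U_1^{i-1})$ together with a matching lower bound, this yields Theorem~\ref{thm:periodic}.
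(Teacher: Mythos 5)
Your high‑level plan matches the paper's: show that $S_1$ can be decoded reliably from the prefix $U_1^{i-1}$ and then apply Fano's inequality, with a base case at $N=8$ read off from Table~\ref{tbl:differentiateStates} and an induction on $N$ that exploits the polar‑transform recursion and period‑$4$ stationarity. Your index bookkeeping intuition is also correct: for $\frac{5N}{8}<i\le\frac{6N}{8}$, the prefix of length $2i-1$ or $2i-2$ of the $2N$‑block determines exactly $(U_1^{i-1}+V_1^{i-1}, V_1^{i-1})$ (plus possibly $U_i+V_i$), and $\frac{5\cdot 2N}{8}<2i-1,2i\le\frac{6\cdot 2N}{8}$ iff $\frac{5N}{8}<i\le\frac{6N}{8}$.

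There is, however, a genuine gap in your amplification step. You frame it as a contraction: ``$H(S_1\mid \text{prefix of doubled block}) \le g\big(H(S_1\mid \text{prefix of block})\big)$ with $g$ a contraction toward $0$,'' and you justify this by an analogy to single‑step polarization (``the combining step produces a coordinate that is a strictly better estimator''). This analogy is wrong in an important way, and the contraction itself is not established. Two conditionally independent observations $A,B$ of a four‑valued $S_1$ with $H(S_1\mid A)=H(S_1\mid B)=x$ do \emph{not} in general satisfy a universal bound $H(S_1\mid A,B)\le g(x)$ with $g$ independent of the distributions — so iterating an unproved $g$ does not give $\epsilon_N\to 0$ without further argument. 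The paper avoids needing any such bound: it observes that knowing $U_1^{2i-2}$ of the doubled block is equivalent to knowing \emph{both} $U_1^{i-1}$ and $V_1^{i-1}$ of the two half‑blocks (no information is lost or ``combined away''), so by induction the prefix actually reveals the full set of $N/8$ length‑$5$ transformed sub‑blocks $\{X_{1+8(j-1)}^{1+8j}\BEight\GEight\}_{j=1}^{N/8}$, which are i.i.d.\ given $S_1$. A simple deterministic decoder (test $U_4\equiv 0$, then $U_2\equiv U_4$, then $U_5\equiv U_3$) then has error probability decaying exponentially in $N$, and Fano's inequality finishes. You should replace the unproved contraction argument with this direct ``$N/8$ i.i.d.\ copies of the base‑case statistic'' observation; it is both simpler and actually sound.
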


\begin{table}
\[
\begin{array}{rccccccccccccccc}
U_1 =   & X_1 & + & X_2 & + & X_3 & + & X_4 & + & X_5 & + & X_6 & + & X_7 & + & X_8 \\
U_2 =   &     &   &     &   &     &   &     &   & X_5 & + & X_6 & + & X_7 & + & X_8 \\
U_3 =   &     &   &     &   & X_3 & + & X_4 &   &     &   &     &   & X_7 & + & X_8 \\
U_4 =   &     &   &     &   &     &   &     &   &     &   &     &   & X_7 & + & X_8 \\
U_5 =   &     &   & X_2 &   &     & + & X_4 &   &     & + & X_6 &   &     & + & X_8 \\
U_6 =   &     &   &     &   &     &   &     &   &     &   & X_6 &   &     & + & X_8 \\ \hline \hline
        & X_1 &   & X_2 &   & X_3 &   & X_4 &   & X_5 &   & X_6 &   & X_7 &   & X_8 \\ \hline
S_1 = 0 &  B  &   &  B  &   &  0  &   &  0  &   &  B  &   &  B  &   &  0  &   &  0  \\
S_1 = 1 &  B  &   &  0  &   &  0  &   &  B  &   &  B  &   &  0  &   &  0  &   &  B  \\
S_1 = 2 &  0  &   &  0  &   &  B  &   &  B  &   &  0  &   &  0  &   &  B  &   &  B  \\
S_1 = 3 &  0  &   &  B  &   &  B  &   &  0  &   &  0  &   &  B  &   &  B  &   &  0  
\end{array}
\]
\caption{Properties of $U_1^6$ and $X_1^8$ for $N=8$. Upper half: $U_1^6$ as a function of $X_1^8$. Lower half: distribution of $X_1^8$ as a function of the initial state $S_1$. In the lower half, ``$B$'' is short for $\Ber(1/2)$ and ``$0$'' designates a value of zero with probability one.}
\label{tbl:BB00}
\end{table}

\begin{table}
\[
\begin{array}{rccc}
        & (U_2,U_4)       & (U_1,U_3,U_5)    & U_6 \;\; \mathrm{vs.} \;\;  U_1^5 \\ \hline
S_1 = 0 & U_4 = 0         &                  & U_6 \perp U_1^5  \\
S_1 = 1 & \mathrm{i.i.d.} &  U_5 = U_3       & U_6 = U_4 \\
S_1 = 2 & U_4 = U_2       &                  & U_6 \perp U_1^5 \\
S_1 = 3 & \mathrm{i.i.d.} &  U_5 = U_3 + U_1 & U_6 = U_4 + U_2 
\end{array}
\]
\caption{Distribution properties of $U_1^6$ for $N=8$ and the four possible initial states.
}
\label{tbl:differentiateStates}
\end{table}
\begin{IEEEproof}
We start by giving an informal explanation as to why the claim holds. Consider the first two columns of Table~\ref{tbl:differentiateStates}, and suppose we had many i.i.d.\ realizations of $U_1^5$, all with the same initial state $s_1$. Hence, the first column would allow us to distinguish --- with very high probability --- between $s_1 = 0$, $s_1 = 2$, and $s_1 \in \{1,3\}$: 
\begin{itemize}
\item If $s_1 = 0$ then all the realizations of $U_4$ would equal $0$.
\item If $s_1 = 2$, all realizations would satisfy $U_2 = U_4$. In roughly half the realizations we would have $U_4  = 1$, since $U_4 \sim \Ber(1/2)$. Each such realization would rule out the previous case.
\item If $s_1 \in \{1,3\}$ then in roughly a quarter of the realizations we would have $U_4 = 1$ and $U_2 = 0$, since $U_2$ and $U_4$ are i.i.d.\ and $\Ber(1/2)$. Such an outcome would distinguishing this case from the two previous ones.
\end{itemize}

To distinguish between $s_1 = 1$ and $s_1 = 3$, we utilize the second column of Table~\ref{tbl:differentiateStates}. Specifically, in both cases, $U_1 \sim \Ber(1/2)$. Thus, in roughly half of the realizations, $U_1 = 1$, and for each such realization we can distinguish between $s_1 = 1$ in which $U_5 = U_3$ and $s_1 = 3$ in which $U_5 \neq U_3$.

Lastly, we claim that such independent realization of $U_1^5$ can indeed be attained. Specifically, for $N \geq 8$ and $\frac{5N}{8} < i \leq \frac{6N}{8}$, the vector $U_1^{i-1}$ can be used to deduce the first $5$ entries of each vector in the set $\{X_{1+8(j-1)}^{1+8j} \BEight \GEight : 1 \leq j \leq N/8\}$. Note that since the period of the process is $4$, the state at time $1+8(j-1)$ is equal to $s_1$, for all values of $j$. Also, given $S_1$, all the vectors in the above set are independent.

Let us move on to the formal proof. The statistical properties of $U_1^5$ detailed above are easy to validate using Table~\ref{tbl:BB00}. Suppose we have $N/8$ realizations of $U_1^5$, which are i.i.d.\  given $S_1$. The above description suggests an algorithm for guessing the value of $S_1$: 
\begin{itemize}
\item If all the realizations of $U_4$ equal $0$, set $\hat{S}_1 = 0$.
\item Otherwise, if all realizations satisfy $U_2 = U_4$, set $\hat{S}_1 = 2$.
\item Otherwise, if all realizations satisfy $U_5 = U_3$, set $\hat{S}_1 = 1$.
\item Otherwise, set $\hat{S}_1 = 3$.
\end{itemize}

A straightforward calculation shows that the probability of misdecoding $S_1$ goes down to $0$ exponentially in $N$. By Fano's inequality \cite[Theorem 2.10.1]{CoverThomas:06b}, we have that
\[
H(S_1|U_1^{i-1}) \leq h(p_e) + p_e \log_2 4 \; ,
\]
where $p_e$ is the probability of misdecoding. Since $p_e$ tends to $0$, the RHS of the above tends to $0$ as well.

Recall the set $\{X_{1+8(j-1)}^{1+8j} \BEight \GEight : 1 \leq j \leq N/8\}$, and denote by $A$ the vectors obtained by taking the prefix of length $5$ of each vector in the set. Obviously, the vectors in $A$ are i.i.d.\ given $S_1$, and have the same distribution as the $U_1^5$ discussed above. All that remains to prove is that we can deduce $A$ from $U_1^{i-1}$, when $\frac{5N}{8} < i \leq \frac{6N}{8}$. We prove this by induction on $N$. The case $N=8$ is immediate. For the step, let the set $B$ be defined similarly to $A$, but with $j$ ranging as $N/8+1 \leq j \leq N/8+N/8$. The induction step assumes that $A$ can be deduced from $U_1^{i-1}$. Hence, $B$ can be deduced from $V_1^{i-1}$, where we recall the shorthand (\ref{eqn:shorthandPeriodic}). Recalling the definition of the polar transform, we must prove that both $A$ and $B$ can be deduced from either $(U_1^{i-1}+V_1^{i-1},V_1^{i-1})$ or $(U_1^{i-1}+V_1^{i-1},V_1^{i-1}, U_i+V_i)$. Obviously, this is true.
\end{IEEEproof}

The proof of Theorem~\ref{thm:periodic} is now a simple consequence of the above.
\begin{IEEEproof}[Proof of Theorem~\ref{thm:periodic}]
By the chain rule applied in two ways to $H(U_i,S_1|U_1^{i-1})$ we deduce that
\[
H(U_i|U_1^{i-1}) + H(S_1|U_i,U_1^{i-1}) = H(S_1|U_1^{i-1}) + H(U_i|U_1^{i-1},S_1) \; .
\]
As discussed, an immediate consequence of Lemma~\ref{lemm:noPolarizationWhenPeriodicGivenState} is that $H(U_i|U_1^{i-1},S_1) = 1/2$. Thus,
\[
\left| H(U_i|U_1^{i-1}) - 1/2 \right| =  \left|H(S_1|U_1^{i-1}) - H(S_1|U_i,U_1^{i-1})\right| \; .
\]
By Lemma~\ref{lemm:initialStateKnown}, there exists an $\epsilon_N \to 0$ such that
\[
0 \leq H(S_1|U_i,U_1^{i-1}) \leq H(S_1|U_1^{i-1}) \leq \epsilon_N \; .
\]
Hence,
\[
\left| H(U_i|U_1^{i-1}) - 1/2 \right| \leq \epsilon_N \; .
\]
\end{IEEEproof}
\section{Appendix}
\begin{IEEEproof}[Proof of Corollary~\ref{coro:EntropyOfXorDifference}]
    By marginalizing (\ref{eq:uvtilde}) over $\tilde{v}_i$ we deduce that
\begin{equation}
\label{eq:tildeUIndepenedentOfRiGivenQi}
p_{\tilde{U}_i|Q_i,R_i}(\tilde{u}_i|q_i,r_i) = p_{\tilde{U}_i|Q_i}(\tilde{u}_i|q_i) \; .
\end{equation}
Similarly,
\begin{equation}
    p_{\tilde{V}_i|Q_i,R_i}(\tilde{v}_i|q_i,r_i) = p_{\tilde{V}_i|R_i}(\tilde{v}_i|r_i) \; .
\end{equation}
Thus, by (\ref{eq:uvtilde}) and the above we deduce that $\tilde{U}_i$ and $\tilde{V}_i$ are independent given $Q_i$ and $R_i$,
\begin{equation}
\label{eq:tildeUtildeVIndependentGivenQiRi}
p_{\tilde{U}_i,\tilde{V}_i|Q_i,R_i}(\tilde{u}_i,\tilde{v}_i|q_i,r_i) = p_{\tilde{U}_i|Q_i,R_i}(\tilde{u}_i|q_i,r_i) \cdot p_{\tilde{V}_i|Q_i,R_i}(\tilde{v}_i|q_i,r_i) \; .
\end{equation}

Define
\begin{equation}
\label{eq:h2}
h_2(\alpha) = - \alpha \log_2 \alpha - (1-\alpha) \log_2 (1-\alpha) \; .
\end{equation}
We start with the following simple claim: for $\alpha,\beta$ between $0$ and $1$,
\begin{equation}
\label{eq:h2InOutDiff}
|h_2(\beta) - h_2(\alpha)| \leq h_2(|\beta - \alpha|) \; .
\end{equation}
Indeed, assume w.l.o.g.\ that $\beta \geq \alpha$. Then,
\begin{equation}
\label{eq:h2diffleq}
h_2(\beta) - h_2(\alpha) = \int_\alpha^\beta h'_2(t) \, dt \leq \int_0^{\beta - \alpha} h'_2(t) \, dt = h_2(\beta-\alpha) \; ,
\end{equation}
where the inequality follows from the concavity of $h_2$ (the derivative $h'_2$ is decreasing). Similarly,
\begin{equation}
\label{eq:h2diffgeq}
h_2(\beta) - h_2(\alpha) = \int_\alpha^\beta h'_2(t) \, dt \geq \int_{1-(\beta-\alpha)}^1 h'_2(t) \, dt = -h_2(1-(\beta-\alpha)) = -h_2(\beta-\alpha) \; .
\end{equation}
We deduce (\ref{eq:h2InOutDiff}) from (\ref{eq:h2diffleq}) and (\ref{eq:h2diffgeq}).

For $q_i$ and $r_i$ fixed, let us adopt the shorthand $\alpha=p_{U_i+V_i|Q_i,R_i}(0|q_i,r_i)$ and $\beta=p_{\tilde{U}_i+\tilde{V}_i|Q_i,R_i}(0|q_i,r_i)$. We claim that
\begin{IEEEeqnarray}{rCl}
|H(\tilde{U}_i+\tilde{V}_i|Q_i,R_i)-H(U_i+V_i|Q_i,R_i)| & = & \left|\sum_{q_i,r_i} p_{Q_i,R_i}(q_i,r_i) \big(h_2(\beta)-h_2(\alpha) \big)\right| \nonumber \\
& \leq & \sum_{q_i,r_i} p_{Q_i,R_i}(q_i,r_i) |h_2(\beta)-h_2(\alpha)| \nonumber \\
& \leq & \sum_{q_i,r_i} p_{Q_i,R_i}(q_i,r_i) h_2(|\beta- \alpha|) \nonumber \\
\label{eq:HxorHalfBound}& \leq & h_2\left(\sum_{q_i,r_i} p_{Q_i,R_i}(q_i,r_i)|\beta - \alpha|\right) \; .
\end{IEEEeqnarray}
The second inequality follows form (\ref{eq:h2InOutDiff}) while the third inequality follows by applying Jensen's inequality \cite[Theorem 2.6.2]{CoverThomas:06b} with respect to the concave function $h_2$.

Our aim now is to bound the argument of $h_2$ in the RHS of the above displayed equation. Let us use the shorthand $p = p_{U_i,V_i|Q_i,R_i}$ and $\tilde{p} = p_{\tilde{U}_i,\tilde{V}_i|Q_i,R_i}$.
By (\ref{eq:tildeUtildeVIndependentGivenQiRi}),
\[
I(U_i;V_i|Q_i,R_i) = \sum_{q_i,r_i} p_{Q_i,R_i}(q_i,r_i) D(p||\tilde{p}) \; ,
\]
where $D(p||\tilde{p})$ is the relative entropy between $p$ and $\tilde{p}$, for $q_i$ and $r_i$ fixed,
\[
D(p||\tilde{p}) = \sum_{u_i,v_i} p(u_i,v_i|q_i,r_i) \log_2 \frac{p(u_i,v_i|q_i,r_i)}{\tilde{p}(u_i,v_i|q_i,r_i)} \; .
\]
Next, let us denote $p_+ = p_{U_i + V_i | Q_i,R_i}$ and $\tilde{p}_+ = p_{\tilde{U}_i + \tilde{V}_i | Q_i,R_i}$. Obviously, $p_+$ is gotten by quantizing $p$:
\[
p_+(0|q_i,r_i) = p(0,0|q_i,r_i) + p(1,1|q_i,r_i) \; , \quad p_+(1|q_i,r_i) = p(1,0|q_i,r_i) + p(0,1|q_i,r_i) \; .
\]
The same quantization is used to derive $\tilde{p}_+$ from $\tilde{p}$. A simple consequence of the log-sum inequality \cite[Theorem 2.7.1]{CoverThomas:06b} is that such a quantization reduces the relative entropy. Namely, for $q_i,r_i$ fixed,
\[
D(p||\tilde{p}) \geq D(p_+||\tilde{p}_+) \; .
\]
Recalling that $\alpha=p_+(0|q_i,r_i)$ and $\beta=\tilde{p}_+(0|q_i,r_i)$, we get from Pinsker's inequality \cite[Equation 11.147]{CoverThomas:06b} that
\[
D(p_+||\tilde{p}_+) \geq \frac{1}{2 \ln 2} \cdot 2(\beta - \alpha)^2 \; .
\]
Aggregating the above inequalities yields
\[
I(U_i;V_i|Q_i,R_i) \geq \frac{1}{\ln 2} \sum_{q_i,r_i} p_{Q_i,R_i}(q_i,r_i)  \cdot (\beta - \alpha)^2 \; .
\]

Now is the time to invoke Lemma~\ref{lem:almost-independent}. Namely, for an $\epsilon'$ which we will determine shortly, the fraction of indices $i$ for which $I(U_i;V_i|Q_i,R_i) \leq \epsilon'$ approaches $1$ as $N \to \infty$. Thus, for such an index $i$ we have that
\[
\frac{1}{\ln 2} \sum_{q_i,r_i} p_{Q_i,R_i}(q_i,r_i)  \cdot |\beta - \alpha|^2 \leq I(U_i;V_i|Q_i,R_i) \leq \epsilon' \; .
\]
Since squaring is a convex function, we apply Jensen's inequality and deduce that
\[
\sum_{q_i,r_i} p_{Q_i,R_i}(q_i,r_i) \cdot |\beta - \alpha| \leq \sqrt{\epsilon' \cdot \ln 2} \; .
\]
Assuming the RHS of the above is less than $1/2$, we deduce from the above, the monotonicity of $h_2$ in $[0,1/2]$,  and (\ref{eq:HxorHalfBound}) that
\[
|H(\tilde{U}_i+\tilde{V}_i|Q_i,R_i)-H(U_i+V_i|Q_i,R_i)| \leq h_2(\sqrt{\epsilon' \cdot \ln 2}) \; .
\]
Thus, taking $\epsilon'$ small enough so that $\sqrt{\epsilon' \cdot \ln 2} \leq 1/2$ and $h_2(\sqrt{\epsilon' \cdot \ln 2}) \leq \epsilon$ finishes the proof.

\end{IEEEproof}

\begin{IEEEproof}[Proof of Lemma~\ref{lem:trivial}]
Denote the distributions of $A$ and $B$ as
\[
A \sim \Ber(\alpha) \; , \quad B \sim \Ber(\beta) \; .
\]
We will assume w.l.o.g.\ that $0 \leq \alpha \leq \beta \leq 1/2$. Thus, according to our assumptions,
\[
h_2(\alpha) \leq h_2(\beta) \; , \quad h_2(\alpha)  \leq 1 - \xi \; , \quad h_2(\beta) \geq \xi \; ,
\]
where $h_2$ is defined in (\ref{eq:h2}). Since $h_2$ is strictly increasing when restricted to the domain $[0,1/2]$, it is invertible and we conclude that
\[
0 \leq \alpha  \leq h_2^{-1}(1 - \xi) \; , \quad h_2^{-1}(\xi) \leq \beta \leq \frac{1}{2} \; .
\]
We simplify the above to
\begin{equation}
\label{eq:pqtau1minustau}
0 \leq \alpha  \leq \frac{1}{2} - \sigma \; , \quad \sigma \leq \beta \leq \frac{1}{2}  \; .
\end{equation}
where
\[
\sigma = \sigma(\xi) = \min \left\{h_2^{-1}(\xi), \frac{1}{2}-h_2^{-1}(1-\xi) \right\} \; .
\]

Define the random variable $D = (C,T)$ as follows,
\[
D = (C,T) \; , \quad T \sim \Ber(1/2) \; , \quad C =
\begin{cases}
A & \mbox{if $T=0$} \; , \\
B & \mbox{if $T=1$} \; .
\end{cases}
\]
One easily gets that
\[
H(A+B|D) = \frac{H(A) + H(B)}{2} \; .
\]
Thus, we are interested in bounding the difference
\[
H(A+B) - H(A+B|D) = I(A+B;D) \; .
\]
We write $I(X+Y;D)$ as in terms of relative entropy \cite[Equation (2.29)]{CoverThomas:06b}, and lower bound that with Pinsker's inequality \cite[Equation 11.147]{CoverThomas:06b}. Doing so results in a straightforward calculation which yields
\begin{IEEEeqnarray*}{rCl}
\label{eq:differenceBoundedByL1} 
H(A+B) - \frac{H(A) +  H(B)}{2} & \geq & \frac{2}{\ln 2} \big(\beta(1-\beta)|1-2\alpha| + \alpha(1-\alpha)|1-2\beta| \big)^2 \\
& \geq & \frac{2}{\ln 2} \big(\beta(1-\beta)|1-2\alpha| \big)^2 \\
& \geq & \frac{2}{\ln 2} \big(\sigma(1-\sigma)|2\sigma| \big)^2 \\
& = & \frac{8}{\ln 2} \sigma^4(1-\sigma)^2 \; , \\
\end{IEEEeqnarray*}
where the last inequality follows from (\ref{eq:pqtau1minustau}). Now, simply take $\Delta$ as the RHS of the above.


\end{IEEEproof}
\twobibs{
\bibliographystyle{IEEEtran}
\bibliography{polar-memory}
}
{

}
\end{document}